\newcommand{\ket}[1]{\big|#1\rangle}
\newcommand{\bra}[1]{\langle #1\big|}
\newcommand{\state}[1]{\ket{#1}\bra{#1}}
\newcommand{\expc}[1]{\left\langle#1\right\rangle}
\newcommand{\hilb}{\mathcal{H}}
\newcommand{\Tr}{\mathrm{Tr}}
\newcommand{\ident}{\mathbbm{1}}
\newcommand{\rr}{\mathbb{R}}
\newcommand{\cx}{\mathbb{C}}
\newcommand{\Expc}{\mathbb{E}}
\newcommand{\dint}[2]{{\displaystyle \int_{#1}^{#2}}}
\newcommand{\norm}[1]{\left\lVert #1\right\rVert}
\newcommand{\lind}{\mathcal{L}}
\newcommand{\filt}{\mathcal{F}}
\newcommand{\EE}{\mathcal{E}}
\newcommand{\fock}{\Gamma(\mathfrak{h})}
\newcommand{\wt}[1]{\widetilde{#1}}
\newcommand{\normpsi}{\left\langle\psi_t|\psi_t\right\rangle}
\newcommand{\inormpsi}{\left\langle\psi_t|\psi_t\right\rangle^{-1}}
\newcommand{\PN}{\mathbb{P}_{\wt{N}}}
\newtheorem{proposition}{Proposition}
\newtheorem{theorem}{Theorem}
\newtheorem{lemma}{Lemma}
\newtheorem{corollary}{Corollary}
\begin{document}
\author{Dustin Keys}
\affiliation{Program in Applied Mathematics, University of Arizona, Tucson, Arizona, 85721, USA}
\author{Jan Wehr}
\affiliation{Department of Mathematics and Program in Applied Mathematics, University of Arizona, Tucson, Arizona, 85721, USA}
\date{\today}
\title{Poisson stochastic master equation unravellings and the measurement problem:\\
a quantum stochastic calculus perspective}

\begin{abstract}
The paper studies a class of quantum stochastic differential equations, modeling an interaction of a system with its environment in the quantum noise approximation.  
The space representing quantum noise is the symmetric Fock space over $L^2\left(\mathbb{R}_+\right)$.  Using the isomorphism of this space with the space of square-integrable functionals 
of the Poisson process, the equations can be represented as classical stochastic differential equations, driven by Poisson processes.  This leads to a discontinuous dynamical state 
reduction which we compare to the Ghirardi-Rimini-Weber model. A purely quantum object, the norm process, is found which plays the role of an observer (in the sense of Everett
[H. Everett III, Reviews of modern physics, 29.3, 454, (1957)]), encoding all events occurring in the system space. An algorithm introduced by Dalibard et al 
[J. Dalibard, Y. Castin, and K. Mølmer, Physical review letters, 68.5, 580 (1992)] to numerically solve quantum master equations is 
interpreted in the context of unravellings and the trajectories of expected values of system observables are calculated.
\end{abstract}
\maketitle
\section{Introduction}
In the strict sense, every quantum system is an open quantum system, with the only exception being the entire universe. A system can be isolated from its environment to
a certain degree but only temporarily. In order to model such an open system, the environment must be approximated in some manner, as it is simply impossible to model exactly.
If the coupling between the system and the environment is such that the correlation time scale of the environment is much smaller than the characteristic time scale of the
system, then the system can be modeled as Markovian and the reduced system dynamics can be described by a quantum dynamical semigroup acting on the density operator of the system. This semigroup
is generated by an operator called the Lindbladian which, in turn, defines a differential equation called the GKSL master equation, after Gorini, Kossakowski, Sudarshan and
Lindblad \cite{GKS76,Lind76}.  This equation is of great utility in the study of open quantum systems, as it governs the time evolution of the reduced density operator and thus
of expected values of all system observables. In the 1990's, Monte Carlo methods were developed, facilitating numerical integration of the master
equation \cite{DCM92,GP92}.  Their mathematical foundations are stochastic processes with values in the system's Hilbert space, whose expected value at every time is the density operator of the 
GKSL equation at that time. 
This stochastic process is called an unravelling of the master equation. It is also known as the quantum trajectories model.  The computational method for integrating the GKSL equation 
based on its unravelling is useful, since, instead of following the evolution of a density operator, 
one averages several realizations of the evolution of  a state \emph{vector}, thus significantly reducing the number of dimensions involved. But a stochastic unravelling itself also has an 
interpretation beyond
that of an algorithm for integrating the master equation, namely that of a random trajectory of the system, where the influence of the environment is modeled via the introduction of noise.
This can be a conceptual tool in the study of open quantum systems and, in addition, offers a way to think of the measurement problem.  
Depending on the interpretation chosen, the noise may be thought of as coming from the environment, or as an irreducible factor in the evolution of \emph{every} quantum system.
\par
The measurement problem arises from the tension between the dynamical equation describing the evolution of a quantum system, the Schr\"odinger equation (SE), which tends to produce
superpositions of possible outcomes, and the observed fact that macroscopic superpositions are never actually observed. In practice, the result of a measurement is always a single well
defined outcome---one of several possible results of an experiment performed on the system, whose initial state may be their superposition. There
is a rule---the Born rule---which describes the probability of any outcome given the system's initial state. An additional postulate---the projection postulate (PP)---says that the state of the 
system is projected onto the state corresponding to the outcome of the measurement. This amounts to saying that any quantum
system obeys two dynamical rules---SE describing a continuous, deterministic evolution, and PP describing an indeterministic projection onto the outcome of a measurement.
To describe most laboratory systems, these two dynamical rules, viewed as an algorithm for describing the outcome of a measurement, are sufficient. However, when viewed as a foundational basis for 
all subsequent physical theories, they leave much to be desired.  At least a part of the problem lies in a rigorous description of the process
of measurement, on which there have been many ideas but little agreement. To some, a measurement is something performed by an observer, thought of either as a conscious entity or
at least something complex enough to act on the system in a certain manner. In either case, there is no hard line between when a system may be thought of as an observer and when
it is simply another quantum system, which means that at the intermediary stages of complexity (or consciousness) this idea loses consistency and so is insufficient as a 
fundamental description of reality. To get around the issue of defining the concepts of measurement or observers, stochastic localization theories were proposed \cite{GRW85,GPR90}, known as
GRW (after Ghirardi, Rimini and Weber) which describes discontinuous localization, or CSL (continuous spontaneous localization) in which localization occurs continuously in time. This
solves the issue by positing that the collapse happens at the fundamental level, involving neither observer nor measurement, and indeterministically, in a manner which is closely
related to the quantum trajectories picture of open quantum systems. Their dynamical equation, in both cases, is a stochastic nonlinear differential equation.  An effort
was put into showing that these features were necessary, purporting to demonstrate that there was no hope for SE to describe collapse  \cite{BG00}.
\par
In the present work, we show that GRW-like dynamics can be achieved by using a unitary dynamical equation together with a normalization procedure, which does not 
introduce any independent dynamical elements, but which conditions the resulting trajectories on the initial condition of the system.
In order to do this, we model interaction with the environment, using \emph{quantum noise}.  Quantum noises are a part of  quantum stochastic calculus (QSC), a theory
which was put into fully mature mathematical form by Hudson and Parthasarathy in the mid 1980's \cite{HudPar84}.  This theory models the environment of a quantum system as a bosonic 
(i.e. symmetric) Fock space over the Hilbert space $L^2\left(\mathbb{R}_+\right)$ and uses four fundamental families of operators (processes) in this space: creation, annihilation, conservation
(or scattering) and time, to drive quantum stochastic differential equations.  Hudson and Parthasarathy identified a class of quantum stochastic differential equations---known as HP equations---which 
describe unitary quantum stochastic processes.  Given a Lindbladian, coefficients of an HP equation can be chosen so that the reduced dynamics of the system, obtained by taking the trace over the 
quantum noise space, obeys the Lindblad equation. A detailed account of the relation of these HP equations to modes of measurement which can be used in laboratory situations was laid out by Barchielli
\cite{Bar06}. 
\par
As is well known, the bosonic Fock space over $L^2\left(\mathbb{R}_+\right)$ is isomorphic to the Hilbert space of square-integrable functions on the Wiener space---via the Wiener chaos decomposition
(see e.g. \cite{Nual06}). 
Several other isomorphisms of this space to $L^2$ spaces on classical probability spaces exist, including the space of square-integrable functionals of the Poisson process.  Using these isomorphisms,  
quantum SDE can be mapped to classical ones, in particular leading to equations driven by Wiener processes (as in CSL), or
by Poisson processes (as in GRW).  One first defines a unitary dynamics on the tensor product of the system's space and the Fock space, 
and then uses one of the isomorphisms mentioned above to obtain a representation of the dynamics as a random classical one.  We thus obtain this way a stochastic unravelling of a master equation
starting from the quantum noise model.  This can be interpreted as deriving stochastic dynamics from a fully quantum model, including models of GRW or CSL type from the quantum noise theory.  
With an appropriate choice of the coefficients of the quantum SDE (equivalent to a choice of the reduced Lindblad dynamics), such a  model leads to spontaneous localization (state reduction) 
during a measurement process, in full quantitative agreement with the Born rule and the projection postulate, but the model presents an object which is part of the total quantum state
and which encodes the history of events taking place in the system, acting as a kind of system observer.  We emphasize that the resulting classical SDE 
for the system dynamics are \emph{nonlinear} when written in terms of a properly normalized system state vector, but when the total state is considered, including the noise,
linearity is restored.
\par
The procedure outlined above has been realized, using the isomorphism of the Fock space with the $L^2$ space over the Wiener space by Parthasarathy and Usha Devi \cite{ParDevi17}, although
previous work was done by Belavkin, Barchielli, and Staszewski \cite{Bel90,BelStas91,BarBel91}.  In this paper we 
focus on the Poisson representation.  In addition to providing a derivation of a model of spontaneous state reduction, our results provide a quantum stochastic context for 
the Monte Carlo algorithm, used to numerically integrate Lindblad master equations. 
The paper is organized as follows.  Section 2 deals with the GKSL equation and describes the unravellings; in Section 3 we discuss Poisson stochastic calculus
which is used in GRW.  Quantum stochastic calculus is introduced in section 4, with section 5 describing how to obtain classical Poisson stochastic calculus via an isomorphism between
square integrable functionals of the Poisson process and the Fock space. Section 6 contains the derivation of unravellings from the HP equation, with section 7 expanding on the Girsanov
transformation used to maintain normalization of the unravelling process. Section 8 deals with applications related to the stochastic Schr\"odinger equation, including the procedure for
obtaining GRW, and interprets the Monte Carlo 
algorithm mentioned previously in the context of stochastic calculus and contains a calculation of the dynamics of expected values of observables, and in section 9 we comment more specifically on the 
relation of our results to the GRW model.  Concluding remarks are the content of section 10.

\section{The GKSL equation and its unravellings}
The GKSL equation can be derived from physical considerations (see e.g. \cite{BP02}), as the equation which governs the reduced evolution of a quantum systems in
contact with the environment, if a Born-Markov assumption is made about the coupling with the environment. The joint dynamics takes place in the total Hilbert space,
$\wt{\hilb}=\hilb_S\otimes\hilb_E$, where $\hilb_S$ is the system Hilbert space, e.g. $\cx^n$, and $\hilb_E$ is an infinite dimensional Hilbert space to be further
specified later. The joint state $\rho(t)$ has a dynamics generated by the total Hamiltonian $H=H_0+H_I$, with $H_0$ the uncoupled Hamiltonian and $H_I$
the interaction Hamiltonian. The Born-Markov assumption is equivalent to requiring that the relaxation time scales of the subsystem must be long compared with the correlation
time of the environment.  The reduced system state is defined in terms of a density operator, $\rho_S$, obtained by a partial trace over the degrees
of freedom associated with the environment and its dynamics comes from the unitary evolution, $U(t)=\exp(-itH)$ (setting $\hbar=1$), of the total density matrix, 
$\rho(t)=U(t)\rho_0U^\dagger(t)$, with $\rho_0$ the initial density matrix, via this partial trace, $$\rho_S(t)=\Tr_E\left[U(t)\rho_0 U^\dagger(t)\right].$$
This dynamics can be expressed, in the Born-Markov approximation, as the action of a quantum dynamical semigroup, $\rho_S(t)=T_t[\rho_S(0)]$, which is generated by an operator, 
$\lind$, sometimes called a Lindbladian. The explicit form for the generator was described by Gorini, Kossakowski, Sudarshan and Lindblad \cite{GKS76,Lind76} who showed that
\begin{equation}
    \lind[\rho]=-i[H_{LS},\rho]+\sum_\alpha L_\alpha\rho L_\alpha^\dagger-\frac{1}{2}\left\{ L_\alpha^\dagger L_\alpha,\rho \right\}.
\end{equation}
where $H_{LS}$ is a self-adjoint operator which is a relaxed Hamiltonian accounting for the system Hamiltonian but with some change in energy levels due to the interaction with the
environment which is called the Lamb shift. The $L_\alpha$'s are bounded operators called Lindblad operators, and the sum 
$\sum_\alpha L_\alpha^\dagger L_\alpha$ must be strongly convergent to a bounded operator. These Lindblad operators can be derived from the interaction Hamiltonian system operators,
if these operators are decomposed into eigenoperators of the system Hamiltonian, and the bath operators coupled to them are modeled through their correlations as coupling coefficients
comprising a matrix of coefficients which is then diagonalized \cite{BP02}. Then the reduced density operator follows a differential equation, derivable from Schr\"odinger dynamics with
appropriate approximations, which is called the
GKSL master equation
\begin{equation}
    \dfrac{d}{dt}\rho_S(t)=\lind[\rho_S(t)].
\end{equation}
\par
Monte Carlo methods exist for integrating this equation \cite{DCM92,GP92}, which in the case of large dimensions can vastly improve on the approach of integrating the 
matrix differential equation. These methods, called unravellings, employ stochastic processes, either Poisson, $N(t)$, or Wiener, $W(t)$, with respect to which a stochastic integral can be 
defined.  This allows to define stochastic differentials and express relations between stochastic processes in the form of stochastic differential equations (SDE), which are manipulated using 
stochastic calculus rules---the  It\^o rules. 
An unravelling with respect to a stochastic process $M(t)= N(t),W(t)$, is a Hilbert space-valued stochastic process $\ket{\psi_t(M)}$, such that if we let 
$\rho_t=\state{\psi_t(M)}$ be the pure state density-matrix-valued stochastic process, then we have
$$\Expc d\rho_t=\lind[\Expc\rho_t]dt.$$
The process $\ket{\psi_t}$ satisfies a stochastic differential equation, $d\ket{\psi_t}=G(\ket{\psi_t})dt+F(\ket{\psi_t})dM(t)$ which is an abbreviated differential notation for the stochastic 
integral equation
$\ket{\psi_t}=\dint{0}{t}G(\ket{\psi_s})ds+\dint{0}{t}F(\ket{\psi_s})dM(s)$.
The Gisin-Percival equation \cite{GP92} is one such unravelling, satisfying the SDE
\begin{align}
    d\ket{\psi_t}=-iH\ket{\psi_t}dt+&\sum_\alpha\left( \expc{L_\alpha^\dagger}_{\psi_t}L_\alpha-\frac{1}{2}L^\dagger_\alpha L_\alpha-\frac{1}{2}|\expc{L_\alpha}_{\psi_t}|^2
    \right)\ket{\psi_t}dt\nonumber\\
    &+\frac{1}{\sqrt{2}}\sum_\alpha \left(L_\alpha-\expc{L_\alpha}_{\psi_t}\right)\ket{\psi_t}dW_\alpha(t)
\end{align}
where $dW_\alpha$ are complex valued Wiener processes satisfying the It\^o rules (with $dW_\alpha^*$ denoting the complex conjugate of $dW_\alpha$) ,
\begin{align*}
    dW_\alpha(t) dW_\beta(t)=dW_\alpha^*(t)dW_\beta^*(t)=0,\\
    dW_\alpha^*(t)dW_\beta(t)=2\delta_{\alpha\beta}dt,
\end{align*}
and $\expc{L_\alpha}_{\psi_t}=\expc{\psi_t|L_\alpha|\psi_t}$, making the Gisin-Percival equation non-linear. Another such unravelling is driven by Poisson processes $N_\alpha(t)$, 
\begin{align}
    d\ket{\psi_t}=-\bigg(iH&+\frac{1}{2}\sum_\alpha L_\alpha^\dagger L_\alpha-\expc{L_\alpha^\dagger L_\alpha}_{\psi_t}\bigg)\ket{\psi_t}dt\nonumber\\
    &+\sum_\alpha \left(\dfrac{L_\alpha}{\langle L_\alpha^\dagger L_\alpha\rangle_{\psi_t}^{1/2}}-I\right)\ket{\psi_t}dN_\alpha(t).
    \label{PDP}
\end{align}
This process, $\ket{\psi_t}$, is called a piecewise deterministic process (PDP) by Breuer and Petruccione \cite{BP02}. They were written in this form by Barchielli and Belavkin
\cite{BarBel91}, though they were derived earlier from quantum stochastic considerations by Belavkin \cite{Bel90}, in the context of non-demolition measurements. The It\^o
rule for Poisson processes is 
\begin{equation*}
    dN_\alpha(t) dN_\beta(t)=\delta_{\alpha\beta}dN_\alpha(t).
\end{equation*}
In both cases, $\ket{\psi_t}$ can be shown to be an unravelling of the GKSL master equation by using the It\^o rules, together with the product rule
$$d(\state{\psi_t})=(d\ket{\psi_t})\bra{\psi_t}+\ket{\psi_t}d\bra{\psi_t}+d\ket{\psi_t}d\bra{\psi_t},$$
and the expectations $\Expc dW_\alpha(t)=0$, $\Expc dN_\alpha(t)=\Expc \expc{L^\dagger_\alpha L_\alpha}_{\psi_t}dt$. Typically, the standard Poisson process has expected value
$\Expc dN_\alpha(t)=dt$, here a change of distribution was used so that the PDP describes an unravelling of the GKSL equation. We will elaborate more on this later.

\section{Poisson measures and their stochastic integrals}

The well studied Wiener process has a property called the chaotic decomposition property, which was first explored by Wiener \cite{Wiener38} and developed further by It\^o \cite{Ito56}. 
Now it constitutes the core of the Malliavin calculus (see e.g. \cite{Nual06}), which as we will see is closely related to quantum stochastic calculus. The basic object is a white noise measure,
$W(\omega,A)$, which to every set $A\in \mathcal{T}$, with $\mathcal{T}$ a $\sigma$-algebra, in some state space $(T,\mathcal{T})$ assigns a mean zero Gaussian random variable, $W(A)$, 
defined on a probability space 
$(\Omega,\filt,\mathbb{P})$, where $\omega\in\Omega$, 
with variance given by an intensity measure $\mu$ such that $\Expc W(A)W(B)=\mu(A\cap B)$. Although the Malliavin calculus for Poisson processes, which will concern us
in what follows, had long been less developed there has been work done to remedy that and Pecatti and Reitzner have compiled some of this in a recent book \cite{PecReit16}. We will
briefly introduce the Malliavin calculus as they did, in particular the chapter by Last, and refer the reader to their book for more details.
\par
Analogously to the Wiener measure, we may think of a point process as a random measure $\xi(\omega,A)$ mapping sets $A$ in state space $(T,\mathcal{T})$, to a random variable $\xi(A)$
taking values in $\mathbb{Z}\cup\{\infty\}$ and defined on probability space $(\Omega,\filt,\mathbb{P})$, with $\omega\in\Omega$.  For this measure, we define a
real valued intensity measure $\nu$ such that
$$\nu(A)=\Expc \xi(A),$$
and if we impose the probability law
$$\mathbb{P}(\xi(A)=k)=\dfrac{\nu(A)^k}{k!}\exp(-\nu(A)),$$
and require that $\{\xi(A_i)\}$ are independent random variables for pairwise disjoint sets $A_1,\ldots,A_n$, then the random measure, $N=\xi$, is called Poissonian. If
$\nu$ is the Lebesgue measure on $T=\rr_+$ then it is the standard Poisson process. A related but important process is the compensated Poisson process, $\wt{N}=N-\nu$, which is a mean
zero martingale.
Point processes can sometimes be described in terms of random variables, $X_i$, with values in $\rr_+$ and distributed according to some distribution $\mathbb{Q}$. 
For example, a simple jump process is defined for a random element $X\in\rr_+$ as the delta measure of that random element, $\delta_X(B)=\ident_B(X)$ indicating a jump of one 
when $X\in B$. We may also add these simple jump processes to create new processes. So for example the Poisson process can be realized as a sum $N=\sum_{i=1}^{n(\omega)}\delta_{X_i}$, where
$n(\omega)$ is random variable which is Poisson distributed with parameter $\lambda$. In this case we have $\nu=\lambda\mathbb{Q}$. A product measure may be defined on $T^m$ as
$$\xi^{(m)}(C)=\sum_{i_1\neq\ldots\neq i_m<\xi(\mathbb{X})}\null\ident_{C}(X_{i_1},\ldots,X_{i_m}),$$
where pairwise different $i_j$ are taken. 
\par
Let $\mathbf{F}$ be the set of functionals taking random measures to real numbers. Then we define the difference operator, $D_x$ with $x\in T$, acting on $f\in\mathbf{F}$ by
$$D_xf(\wt{N})=f(\wt{N}+\delta_x)-f(\wt{N}),$$
and inductively define
$$D^n_{x_1,\ldots,x_n}f(\wt{N})=D_{x_n}^1D_{x_1,\ldots,x_{n-1}}^{n-1}f(\wt{N}),$$
with $D^0f=f$. The operator $D^n_{x_1,\ldots,x_n}$ is symmetric in $(x_1,\ldots,x_n)$ and the map $(x_1,\ldots,x_n,\wt{N})\mapsto D^n_{x_1,\ldots,x_n}f(\wt{N})$ is measurable. Define
the symmetric and measurable function
$$T_nf(x_1,\ldots,x_n)=\Expc D^n_{x_1,\ldots,x_n}f(\wt{N}),$$
with $T_0f=\Expc f(\wt{N})$. If $f,g\in L^2(\PN)$ with $\PN=\mathbb{P}(\wt{N}\in\cdot)$ then the following isometry holds
$$\Expc f(\wt{N})g(\wt{N})=\Expc f(\wt{N})\Expc g(\wt{N})+\sum_{n=1}^\infty\frac{1}{n!}\expc{T_nf,T_ng}_n$$
showing that $L^2(\PN)$ is isometric to a direct sum over all $L^2(\nu^n)$ (with rescaled inner products).
For simplicity let us write $L^2(\wt{N})$ for $L^2(\PN)$. Then for $g\in L^2(\wt{N})$ (or more generally $g\in L^1(\wt{N})$), it is possible with some care to define multiple stochastic integrals
$$I_n(g)=\int g(x_1,\ldots,x_n)d\wt{N}(x_1,\ldots,x_n)$$
Then, analogously to the Wiener case, we have the following orthogonality relation 
\begin{equation}
    \Expc I_n(f)I_m(g)=\delta_{nm}n!\expc{\wt{f},\wt{g}}_n
    \label{ortho}
\end{equation}
 for $f\in L^2(\nu^n)$, $g\in L^2(\nu^m)$ and with $\widetilde{h}$ denoting the symmetrization of $h\in L^2(\nu^p)$
$$\wt{h}=\frac{1}{p!}\sum_{\sigma\in\mathfrak{S}_p}h(x_{\sigma(1)},\ldots,x_{\sigma(p)})$$
over all permutations in the permutation group $\mathfrak{S}_p$ (see \cite{RW97} for an account of the orthogonality relation and multiple stochastic integrals). This leads to the 
chaos expansion, which is a unique representation of $f(\wt{N})$ for $f\in L^2(\wt{N})$ (so that $T_nf\in L^2(\nu^n)$),
$$f(\wt{N})=\sum_{n=0}^\infty I_n(T_n f),$$
where $I_0(T_0f)=\Expc f(\wt{N})$. We call the $n$-th chaos $\mathcal{C}_n=\mbox{span}\{I_n(T_nf)~:~f\in L^2(\wt{N})\}$ so that
$$L^2(\wt{N})\cong\bigoplus_n\mathcal{C}_n.$$
\par
The Malliavin derivative, which for Poisson measures coincides with the difference operator $D_x$, acts as a lowering operator on the chaoses, that is it acts on a random variable
$$F=\Expc F+\sum_{n=1}^\infty I_n(f_n)$$
such that 
$$\sum_{n=0}^\infty nn!\norm{f_n}_n^2<\infty$$
by
$$D_xF=\sum_{n=0}^\infty nI_{n-1}(f_n(x,\cdots)),$$
(see Theorem 3.3 of \cite{LastPen11}).
It is not too difficult to see that there is a special class of random variables which are invariant under this action
$$\EE(f)=1+\sum_{n=1}^\infty\frac{1}{n!} I_n(f^{\otimes_n}),$$
so that
$$D_x\EE(f)=f(x)\EE(f),$$
which shows that if we write $\mathfrak{E}=\{\EE(f)~:~f\in L^2(\nu)\}$, then $\mbox{span}\{\mathfrak{E}\}$ is invariant under $D_x$. These are directly related to the exponential vectors of 
Fock space in quantum mechanics. We have that $\mathfrak{E}$ is a total subset of $L^2(\wt{N})$, (a proof can be found in \cite{Nual06} for the Wiener case, which carries over with little 
modification to this case). We may construct processes 
by taking $(T,\mathcal{T},\nu)$ to be a time domain with a corresponding measure and by restricting the functions $f$ to $f_t=f\ident_{[0,t]}$, so that the integration is not over all $T$ but only
up to a certain time $t$. Then differential of $I_n\left(f^{\otimes_n}_t\right)$ is
$$dI_n\left(f^{\otimes_n}_t\right)=nf_tI_{n-1}\left(f^{\otimes_{n-1}}_t\right)d\wt{N}(t)$$
so that for $\EE(f_t)$ we have
\begin{equation}
    d\EE(f_t)=f_t\EE(f_t)d\wt{N}(t). 
    \label{doldadeeqn}
\end{equation}
This is the SDE satisfied by the Dol\'eans-Dade exponential process \cite{Prot04}. Note that $\EE(f)$ is a random variable, and $\EE(f_t)$ is a stochastic process defined via the filtration through
the conditional expectation,
$$\EE(f_t)=\Expc[\EE(f)|\filt_t].$$
Here $\EE(f)=\EE\left(\dint{T}{}fd\wt{N}\right)$ but that we sometimes omit the integral in order to coincide with the notation for exponential vectors.
If $X_t$ and $Y_t$ are two general processes, then they satisfy a general product rule \cite{Prot04},
\begin{equation}
    \EE(X_t)\EE(Y_t)=\EE\left(X_t+Y_t+[X,Y]_t\right),
\end{equation}
where $[X,Y]_t$ is the quadratic covariation process.
For two Poissonian martingales $X_t=\dint{0}{t}f_sd\wt{N}(s)$ and $Y_t=\dint{0}{t}g_sd\wt{N}(s)$, we have
$$[X,Y]_t=\int_0^tf_sg_sN(ds)=\int_0^tf_sg_s(d\wt{N}(s)+\nu(ds)),$$
which may be compared with the Wiener case where if $X_t=\dint{0}{t}f_sdW(s)$ and $Y_t=\dint{0}{t}g_sdW(s)$ then $[X,Y]_t=\dint{0}{t}f_sg_s\mu(ds)$.
This means that the product rule for a compensated Poisson-driven Dol\'eans-Dade process is
\begin{equation}
    \EE(X_t)\EE(Y_t)=\exp\left(\int_0^tf_sg_s\nu(ds)\right)\EE\left(\int_0^tf_s+g_s+f_sg_sd\wt{N}(s)\right)
    \label{prodrule}
\end{equation}
\section{Quantum stochastic calculus}
\label{qscsection}
Attempts to incorporate noise into quantum mechanics go back at least to the 1960's, but the fully formed analogue of the classical 
It\^o calculus was first constructed by Hudson and Parthasarathy in 1984 \cite{HudPar84}. We will only explore the basic results of this beautiful theory and refer the 
reader to \cite{Par92} for a full account. The setting is a bosonic Fock space $\hilb=\Gamma(\mathfrak{h})$ (fermionic versions are also possible) over a one particle space $\mathfrak{h}$. 
If we think of $\mathfrak{h}=L^2(\rr_+;\cx^d)$, as $d$-dimensional vector-valued functions of time, then the Fock space factors into tensor products for disjoint intervals of time
\begin{equation}
    \hilb=\hilb_{[0,t_1)}\otimes\hilb_{[t_1,t_2)}\otimes\cdots,
    \label{tensorfact}
\end{equation}
where $\hilb_{[t_i,t_{i+1})}=L^2([t_i,t_{i+1});\cx^d)$. In Fock space we can define a set of vectors
$$E=\left\{\ket{e(f)}=(1,f,\frac{1}{\sqrt{2}}f\otimes f,\cdots,\dfrac{1}{\sqrt{n!}}f^{\otimes_n},\ldots)~:~f\in\mathfrak{h}\right\}$$ called exponential vectors.
They have a simple rule for inner products
$$\expc{e(f)|e(g)}=\exp\expc{f|g}$$
and a factorization property 
$$\ket{e(f\oplus g)}=\ket{e(f)}\otimes\ket{e(g)},$$
where $f,g$ come from orthogonal subspaces, e.g. $f=h\ident_{[0,t)}$ and $g=h\ident_{[t,\infty)}$ for some $h\in\mathfrak{h}$. Operations acting in $\mathfrak{h}$ can be `second quantized'
into operators acting on $\fock$. One such operator is the Weyl operator $W(u,U)$ second quantizing translation by $u\in\mathfrak{h}$, and the transformation given by unitary $U$ acting on 
$\mathfrak{h}$. This operator acts on exponential vectors, $\ket{e(v)}$ like
$$W(u,U)\ket{e(v)}=\exp\left(-\frac{1}{2}\norm{u}^2-\expc{u,Uv}\right)\ket{e(Uv+u)}.$$
We may construct two semigroups, $t\mapsto W(tu,I)$ and $t\mapsto W(0,e^{itH})$ for self-adjoint $H$, whose generators are respectively $p(u)$ and $\lambda(H)$. From 
$p(u)$ we construct an operator $q(u)=-p(iu)$, and use it to define the annihilation and creation operators 
$$a(u)=\frac{1}{2}(q(u)+ip(u)),~~~a^\dagger(u)=\frac{1}{2}(q(u)-ip(u)).$$
We may extend the definition of $\lambda(H)$, the conservation operator, to include bounded not necessarily self-adjoint operators $B$ by defining
$$\lambda(B)=\lambda\left(\frac{1}{2}\left(B+B^\dagger\right)\right)+i\lambda\left(\frac{1}{2i}\left(B-B^\dagger\right)\right).$$
Exponential vectors are total in Fock space and so their span serves as a good core for the domains of these operators. In particular, we have
\begin{align*}
    a(u)\ket{e(g)}&=\expc{u|g}\ket{e(g)}\\
    \bra{e(f)}a^\dagger(u)&=\expc{f|u}\bra{e(f)}\\
    \expc{e(f)|\lambda(H)|e(g)}&=\expc{f|Hg}\expc{e(f)|e(g)}
\end{align*}
If we let $\{\ket{\alpha}\}$ be an orthonormal basis of 
$\cx^d$, then $\ident_{[0,t]}\ket{\alpha}\in L^2(\rr_+;\cx^d)$ and we may construct the fundamental processes,
\begin{align}
    A_\alpha(t)&=a(\ident_{[0,t]}\bra{\alpha})\otimes I_{(t,\infty)},\nonumber\\
    A^\dagger_\alpha(t)&=a^\dagger(\ident_{[0,t]}\ket{\alpha})\otimes I_{(t,\infty)},\\
    \Lambda_\alpha^\beta(t)&=\lambda(\ident_{[0,t]}\ket{\beta}\bra{\alpha})\otimes I_{(t,\infty)}\nonumber,
\end{align}
where $I_{(t,\infty)}$ is the identity on $\hilb_{(t,\infty)}$.
Hudson and Parthasarathy defined integrals with respect to these processes starting from simple processes with respect to a partition of $\rr_+$. The Fock factorizability condition,
equation (\ref{tensorfact}), ensures that these increments are independent. A limit may then be taken where the partition spacing goes to zero and an integral may be defined, where
the differentials are the limits of the fundamental processes' increments. They then derived the following quantum It\^o rules for these differentials,
\begin{equation}
\begin{array}{ccc} 
          dA_\alpha(t)dA_\beta^\dagger(t)=\delta_{\alpha\beta}dt & dA_\alpha(t)d\Lambda^\beta_\gamma(t)=\delta_{\alpha\beta}dA_\gamma(t)\\ 
           d\Lambda^\alpha_\beta(t)dA_\gamma^\dagger(t)=\delta_{\beta\gamma}dA_\alpha^\dagger(t) & d\Lambda^\alpha_\beta(t)d\Lambda^\gamma_\mu(t)=\delta_{\beta\gamma}d\Lambda^\alpha_\mu(t),\\ 
\end{array}
\end{equation}
where the differentials satisfy
\begin{align}
    \expc{e(f)|d\Lambda_\alpha^\beta(t) e(g)}&=f^*_\beta(t)g_\alpha(t)dt\expc{e(f)|e(g)},\nonumber\\
    \expc{e(f)|dA_\alpha(t)e(g)}&=g_\alpha(t)dt\expc{e(f)|e(g)},
        \label{fundmatelem}\\
    \expc{e(f)|dA^\dagger_\alpha(t) e(g)}&=f^*_\alpha(t)dt\expc{e(f)|e(g)}.\nonumber
\end{align}
If we let the vacuum be denoted $\ket{\Omega}=\ket{e(0)}$ by virtue of these definitions 
\begin{align}
dA_\alpha(t)\ket{\Omega}=0\\
d\Lambda_\alpha^\beta(t)\ket{\Omega}=0.
\label{actonvac}
\end{align}

Quantum stochastic processes, considered as integrals of certain operator processes with respect to the fundamental processes, may now be manipulated as quantum stochastic differential equations
in a way analogous to the classical way. We may ask the following question: Under what conditions might a quantum 
stochastic process, i.e. an integral with respect to the fundamental processes, be unitary. A class of such equations is given by the Hudson-Parthasarathy equation,
\begin{equation}
    dU_t=\left(\sum_{\alpha}L_\alpha dA_\alpha^\dagger(t)-\sum_\beta L_\beta^\dagger S_{\beta\alpha}dA_\alpha(t)+\sum_\beta(S_{\beta\alpha}-I)d\Lambda_\alpha^\beta(t)-\left(iH+
    \dfrac{1}{2}\sum_\alpha L_\alpha^\dagger L_\alpha\right)dt\right)U_t.
    \label{hpeqn}
\end{equation}
whose solution $U_t$ is unitary when $L_\alpha$ are bounded, $H=H^\dagger$, and $\sum_\alpha S_{\alpha\beta}^*S_{\alpha\gamma}=\sum_\alpha S_{\beta\alpha}S_{\gamma\alpha}^*=\delta_{\beta\gamma}$.
Note that we have omitted the tensor product symbol between system operators and fields and consider the system operators to be dilated to the whole Hilbert space $\wt{\hilb}=\hilb_S
\otimes\Gamma(\mathfrak{h})$ when needed by tensoring with the identity of $\Gamma(\mathfrak{h})$ or vice versa.\par
We can construct classical distributions from spectral measures of quantum observables. An observable may be decomposed via a projection valued measure $\xi:\rr\mapsto\mathcal{P}(\mathfrak{h})$ 
taking sets $A\subset \rr$ to $\xi(A)$ a corresponding projection, where the set of projections acting on a Hilbert space $\hilb$ is denoted $\mathcal{P}(\mathfrak{h})$. This decomposition is 
the spectral one,
$$O=\int_\rr \lambda\xi(d\lambda).$$
We may construct a classical distribution $\mu$ from this observable and a chosen state $\ket{u}$ as follows
$$\mu(A)=\int_A\lambda\expc{u|\xi(d\lambda)|u}.$$
In his book \cite{Par92}, Parthasarathy found that if $X(A)=\lambda(\eta(A))$ for some projection valued measure $\eta(A)\in\mathcal{P}(\mathfrak{h})$, then with respect to the coherent 
state $\ket{\wt{e}(f)}=\exp\left(-\frac{1}{2}\norm{f}^2\right)\ket{e(f)}$, $X(A)$ will give rise to a classical Poisson distribution with mean $\expc{f|\eta(A)|f}$. If we restrict
ourselves to the projection valued measure $\eta_t=\ident_{[0,t]}$ then we may consider $X$ to be a quantum Poisson process $X([0,t])=X(t)$, which highlights a curious relationship 
between time in the classical sense and time as a projection valued measures as we needn't have chosen $\eta_t=\ident_{[0,t]}$ per se. He also showed that if the desired state to act on is 
the vacuum $\ket{\Omega}$, then we can use the Weyl operators to get an equivalent process $$\expc{\Omega|W(-f,I)\lambda(\eta_t)W(f,I)|\Omega}=\expc{\wt{e}(f)|\lambda(\eta_t)|\wt{e}(f)}.$$
The operator $W(-f,I)\lambda(\eta_t)W(f,I)=\lambda(\eta_t)+a(\eta_t f)+a^\dagger(\eta_t f)+\expc{f|\eta_t|f}$ and so it can be shown that
\begin{equation}
    dX_\alpha(t)=d\Lambda_\alpha^\alpha(t)+dA_\alpha(t)+dA^\dagger_\alpha(t)+dt,
\end{equation}
defined on the core $\mbox{span}(E)$, form a set of independent Poisson processes when acting on the vacuum. We will further explore 
this in the next section.
\section{An isomorphism of Hilbert spaces}
\label{isomorphsect}
Fock space is isomorphic to $L^2(M)$ for certain random measures $M$. Both these spaces are Hilbert spaces, with the latter having the inner
product 
$$(f,g)_{L^2(M)}=\Expc f^*(M)g(M),$$
In the case where $M=W$, the white noise measure, the 
isomorphism was explored by Segal \cite{Segal56}, building off of Wiener's work with the chaos decomposition \cite{Wiener38}. However, it was realized that a similar relationship holds 
for $M=\wt{N}$ \cite{Versh75,Surg84}, and in fact for general L\'evy processes, though they do not in general have chaos decompositions. Vershik and Tsilevich have compiled the details 
of this isomorphism in the general L\'evy case \cite{Versh03}. We call the isomorphism $\Theta:\Gamma(\mathfrak{h})\rightarrow L^2(M_{\mathfrak{h}})$, where we note that there 
are as many independent processes as there are Lindblad operators so we denote this by $M_{\mathfrak{h}}$ and call each individual process $M_\alpha$. For processes exhibiting the chaotic 
decomposition property this isomorphism will map the $n$-particle space into the $n$-chaos, for each $n$.
In particular, it will preserve exponential vectors,
$$\Theta\ket{e(f)}=\EE(f).$$ 
Since both these sets are total in their respective spaces, they are convenient for calculations involving $\Theta$. In particular, if we would like to show that the quantum 
Poisson process $X_\alpha$ is equivalent to the classical Poisson process, then we may consider how they act on these objects.
\begin{theorem}
    Under the isomorphism $\Theta$, $X_\alpha(t)$ is equivalent to the operator of multiplication by the Poisson process $N(t)=\Theta X(t)\Theta^{-1}$.
\end{theorem}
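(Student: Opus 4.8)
The plan is to establish the operator identity $\Theta X_\alpha(t)\Theta^{-1} = N_\alpha(t)$, where the right-hand side denotes multiplication by the classical Poisson process, by testing it against the total families of exponential vectors and Dol\'eans--Dade exponentials. Since $\Theta$ is an isometry with $\Theta\ket{e(f)} = \EE(f)$, and $\{\ket{e(f)}\}$ is total in $\fock$ while $\{\EE(f)\}$ is total in $L^2(\wt{N})$, it suffices to show that the two operators induce the same sesquilinear form on these cores, i.e.
\begin{equation*}
\expc{e(g)|X_\alpha(t)|e(f)} = \Expc\!\left[\overline{\EE(g)}\,N_\alpha(t)\,\EE(f)\right]
\end{equation*}
for all $f,g$. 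Equality of forms on a total set of exponential (analytic) vectors then upgrades to equality of the essentially self-adjoint operators, hence to the claimed equivalence under $\Theta$.

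First I would compute the left-hand side from the quantum It\^o data. Writing $dX_\alpha = d\Lambda_\alpha^\alpha + dA_\alpha + dA_\alpha^\dagger + dt$ and substituting the fundamental matrix elements \eqref{fundmatelem}, the four contributions assemble into
\begin{equation*}
\expc{e(g)|dX_\alpha(t)|e(f)} = \big(1+g_\alpha^*(t)\big)\big(1+f_\alpha(t)\big)\,dt\,\expc{e(g)|e(f)},
\end{equation*}
because $g_\alpha^* f_\alpha + f_\alpha + g_\alpha^* + 1$ factors. Since $\expc{e(g)|e(f)}=\exp\expc{g|f}$ is independent of $t$ and $X_\alpha(0)=0$, integrating yields $\expc{e(g)|X_\alpha(t)|e(f)} = \exp\expc{g|f}\int_0^t(1+g_\alpha^*(s))(1+f_\alpha(s))\,ds$.

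For the classical side I would use the product rule \eqref{prodrule} together with $\overline{\EE(g)}=\EE(\bar g)$ to write $\overline{\EE(g)}\,\EE(f) = \exp\expc{g|f}\,\EE(h)$, where $h=\bar g+f+\bar g f$ and the scalar prefactor absorbs the quadratic covariation. As $N_\alpha(t)$ is $\filt_t$-measurable, inside the expectation I may replace $\EE(h)$ by the martingale $\EE(h_t)=\Expc[\EE(h)\mid\filt_t]$, and then expand $d[N_\alpha(t)\EE(h_t)]$ by the It\^o product rule using \eqref{doldadeeqn}, $dN(t)=d\wt{N}(t)+dt$, and $d\wt{N}\,d\wt{N}=dN$. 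The martingale increments vanish under $\Expc$ and the mean-one property $\Expc\EE(h_t)=1$ leaves $d\Expc[N_\alpha(t)\EE(h_t)] = (1+h(t))\,dt=(1+\bar g(t))(1+f(t))\,dt$, which integrates to the quantum expression above. Finally I would install the multidimensional structure: for $f=(f_\beta)$ the exponential factorizes as $\EE(f)=\prod_\beta\EE_\beta(f_\beta)$ over the independent processes $M_\beta$, and since $N_\alpha(t)$ touches only the $\alpha$-th factor, independence reduces the calculation to the scalar case while the remaining factors supply the off-diagonal terms of $\exp\expc{g|f}$.

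The step I expect to be the main obstacle is the classical computation: handling the quadratic-covariation term $dN\,d\EE$ correctly (it is precisely what restores the factor $(1+h)$ rather than a bare $1$), checking $\overline{\EE(g)}=\EE(\bar g)$ for complex integrands, and carrying out the independence bookkeeping across the $M_\beta$. A secondary but real point is the passage from equality of forms on $\mathrm{span}(E)$ to equality of the unbounded operators $\Theta X_\alpha(t)\Theta^{-1}$ and multiplication by $N_\alpha(t)$, which rests on totality of the exponential family and essential self-adjointness of $X_\alpha(t)$ on this core, after which $\Theta$ intertwines the two self-adjoint operators.
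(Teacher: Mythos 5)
Your proposal is correct and follows essentially the same route as the paper: both verify the identity by computing matrix elements of $X_\alpha(t)$ and of multiplication by $N_\alpha(t)$ against the total families of exponential vectors and Dol\'eans--Dade exponentials, obtaining $\exp\expc{g|f}\int_0^t(1+g_\alpha^*(s))(1+f_\alpha(s))\,ds$ on both sides. The only divergence is in evaluating the classical expectation --- the paper expands $N_t=I_1(\ident_{[0,t]})+t$ and invokes the chaos orthogonality relation (\ref{ortho}), whereas you use the product rule (\ref{prodrule}) together with (\ref{doldadeeqn}) and an It\^o/martingale argument, which is an equivalent computation --- and your closing remark on passing from equality of forms on the exponential core to equality of the operators addresses a point the paper leaves implicit.
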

\begin{proof}
    Analogously to how Parthasarathy and Usha Devi handled the Wiener process, we will show this by computing the `matrix elements' with respect to exponential processes/vectors. 
    Using equation (\ref{fundmatelem}), and dropping for the moment the indices, in the Fock case we have
    $$\expc{e(f)|dX(t) e(g)}=\expc{e(f)|e(g)}(f_t^*g_tdt+f^*_tdt+g_tdt+dt).$$
    In $L^2(\wt{N})$, the corresponding calculation is
    \begin{align*}
        \Expc N_t\EE(f)^*\EE(g)&=\exp\expc{f|g}\Expc N_t\EE(f^*+g+f^*g)\\
        &=\exp\expc{f|g}\Expc (I_1(\ident_{[0,t]})+t)\left(1+\sum_{n=1}^\infty\frac{1}{n!}I_n\left((f^*+g+f^*g)^{\otimes_n}\right)\right),
    \end{align*}
    where we have used that $\wt{N}_t=\dint{}{}\ident_{[0,t]}(s)d\wt{N}_s=I_1(\ident_{[0,t]})=N_t-t.$
    We may now reduce this using the orthogonality property of stochastic integrals, equation (\ref{ortho}), and the fact that $\Expc I_n(f_n)=0$ for all $n$ to get
    \begin{equation}
        \Expc N_t\EE(f)^*\EE(g)=\exp\expc{f|g}\left(\int_0^tf_s^*g_sds+\int_0^tf_s^*ds+\int_0^tg_sds +t\right),
        \label{expcN}
    \end{equation}
    which shows that $N_t\cong\dint{0}{t}dX$, under $\Theta$. Taking the dependence on $\alpha$ into account presents no problem as integrals with respect to $N_\alpha(t)$
    will be independent of those with respect to $N_\beta(t)$ for different $\alpha,\beta$.
\end{proof}
\begin{figure}
\centering
\begin{tikzcd}
    L^2(\wt{N}_{\mathfrak{h}};\hilb_0)
    \arrow[r,"\wt{U}_t"]
    \arrow[d,"\Theta^{-1}"]
    &
    L^2(\wt{N}_{\mathfrak{h}};\hilb_0)
    \\
    \hilb_0\otimes\Gamma(\mathfrak{h})
    \arrow[r,"U_t"]
    &
    \hilb_0\otimes\Gamma(\mathfrak{h})
    \arrow[u,"\Theta"]
\end{tikzcd}
\caption{A diagram showing how unitary evolution is mapped to a stochastic evolution}
\label{cd1}
\end{figure}
The unitary evolution given by the HP equation will be mapped to a stochastic evolution in Poisson space, as shown in figure \ref{cd1}, so that the stochastic
evolution is expressed as $\wt{U}_t=\Theta U_t\Theta^{-1}$. While the unitary evolution in the quantum space is deterministic, we think of the evolution in Poisson space as
probabilistic. This amounts to taking a \emph{probabilistic interpretation} of the quantum evolution (see e.g. \cite{OQS206,Meyer06} and references therein). Interestingly, we could have just 
as easily mapped the quantum space to a Wiener space. As the quantum evolution is not dependent on these isomorphisms, this
shows that the dynamics is not truly dependent on whichever probabilistic interpretation we choose. In this situation, it does not make sense to attribute anything fundamental
to the `jump' or the `diffusion', they are merely two ways to view the underlying quantum evolution. This dichotomy mirrors the `particle' and `wave' interpretations of
the quantum state, where the discreteness of the particle, e.g. in direct detection, is modeled with a Poisson process, and the continuity of the wave, e.g. in homodyne detection, is modeled
with a Wiener process.

\section{Piecewise deterministic processes from HP evolution}
Starting from the HP equation (\ref{hpeqn}), we may reexpress the noises in terms of the quantum Poisson processes, $X_\alpha$, by adding and subtracting the corresponding terms.
This will result in a unitary adapted process which we can apply to an initial state $\ket{\phi_0\otimes\Omega}$ where $\ket{\phi_0}$ is the initial system state and $\ket{\Omega}$ is the
vacuum of Fock space. Using the isomorphism $\Theta$, or more accurately $I_S\otimes\Theta$ which we will denote by the same symbol, this will give the stochastic evolution of the 
state vector $\ket{\psi_t}=\Theta U_t\ket{\phi_0\otimes\Omega}$. Then from
the HP equation we can calculate the SDE that $\ket{\psi_t}$ must satisfy, $d\ket{\psi_t}=\Theta dU_t\ket{\phi_0\otimes\Omega}$. When acting on the vacuum as an initial condition, 
by virtue of equation (\ref{actonvac}), the HP equation can be greatly simplified:
\begin{align*}
    d\ket{\psi_t}&=\Theta\left(\sum_{\alpha}L_\alpha dX_\alpha-\left(\sum_\beta L_\beta^\dagger S_{\beta\alpha}+L_\alpha\right)dA_\alpha(t)+\sum_\beta(S_{\beta\alpha}-I-\delta_{\alpha\beta}L_\alpha)d\Lambda_\alpha^\beta(t)-
    L_\alpha dt\right.\\
    &\left.-\left(iH+\dfrac{1}{2}\sum_\alpha L_\alpha^\dagger L_\alpha\right)dt\right)U_t\ket{\phi_0\otimes\Omega}\\
    &=\Theta\left(\sum_\alpha L_\alpha dX_\alpha-L_\alpha dt-\left(iH+\frac{1}{2}\sum_\alpha L_\alpha^\dagger L_\alpha\right)dt\right)\Theta^{-1}\Theta U_t\ket{\phi_0\otimes\Omega}.\\
\end{align*}
This results in a linear SDE for $\ket{\psi_t}$, using the relationship $dN_{\alpha}=\Theta dX_{\alpha}\Theta^{-1}$,  driven by a compensated Poisson process 
$d\wt{N}_\alpha=dN_\alpha-dt$,
\begin{equation}
    d\ket{\psi_t}=\left(\sum_\alpha L_\alpha d\wt{N}_\alpha-\left(iH+\frac{1}{2}\sum_\alpha L_\alpha^\dagger L_\alpha\right)dt\right)\ket{\psi_t}.
    \label{linsde}
\end{equation}
It is a simple exercise to show that this is an unravelling of the GKSL equation, though it is not normalized. This equation does allow us to consider the norm-squared through
the SDE
$$d\normpsi=(d\bra{\psi_t})\ket{\psi_t}+\bra{\psi_t}(d\ket{\psi_t})+d\bra{\psi_t}d\ket{\psi_t}$$
This SDE can be solved.
\begin{lemma}
The norm-squared process satisfies the SDE,
$$d\normpsi=\sum_\alpha R_\alpha d\wt{N}_\alpha\normpsi,$$
where $R_\alpha=\dfrac{\expc{\psi_t\Big|L_\alpha+L_\alpha^\dagger+L_\alpha^\dagger L_\alpha\Big|\psi_t}}{\normpsi}$. Thus the norm-squared process is a Dol\'eans-Dade process driven
by the martingale $\wt{N}$,
$$\normpsi=\EE\left(\sum_\alpha\int_0^tR_\alpha d\wt{N}_\alpha\right).$$
\end{lemma}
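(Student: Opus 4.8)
The plan is to expand the stochastic product rule
$$d\normpsi=(d\bra{\psi_t})\ket{\psi_t}+\bra{\psi_t}(d\ket{\psi_t})+(d\bra{\psi_t})(d\ket{\psi_t})$$
term by term, feeding in the linear SDE (\ref{linsde}) for $d\ket{\psi_t}$ together with its adjoint, and then reducing the outcome with the It\^o rule for the driving Poisson processes. First I would take the Hermitian conjugate of (\ref{linsde}); using $H=H^\dagger$ this gives
$$d\bra{\psi_t}=\bra{\psi_t}\left(\sum_\alpha L_\alpha^\dagger d\wt{N}_\alpha+\left(iH-\frac{1}{2}\sum_\alpha L_\alpha^\dagger L_\alpha\right)dt\right),$$
and I would substitute both differentials into the three terms above and collect by order.

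Adding the two first-order terms $(d\bra{\psi_t})\ket{\psi_t}$ and $\bra{\psi_t}(d\ket{\psi_t})$, the Hamiltonian pieces $\pm iH\,dt$ cancel because $H$ is self-adjoint, the two half-$L_\alpha^\dagger L_\alpha$ drifts combine into $-\sum_\alpha\expc{L_\alpha^\dagger L_\alpha}_{\psi_t}\,dt$, and the noise pieces assemble into $\sum_\alpha\expc{\psi_t|L_\alpha+L_\alpha^\dagger|\psi_t}\,d\wt{N}_\alpha$. The It\^o-correction term $(d\bra{\psi_t})(d\ket{\psi_t})$ is where the decisive computation lies: writing $d\wt{N}_\alpha=dN_\alpha-dt$ and applying the Poisson rule $dN_\alpha dN_\beta=\delta_{\alpha\beta}dN_\alpha$ together with $dN_\alpha\,dt=dt\,dt=0$ yields $d\wt{N}_\alpha d\wt{N}_\beta=\delta_{\alpha\beta}(d\wt{N}_\alpha+dt)$. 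Since any product involving a $dt$-drift factor vanishes, only the $L_\alpha^\dagger L_\beta$ contraction survives, producing $\sum_\alpha\expc{\psi_t|L_\alpha^\dagger L_\alpha|\psi_t}(d\wt{N}_\alpha+dt)$.

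The crux is then the exact cancellation of the $dt$ contributions: the $-\sum_\alpha\expc{L_\alpha^\dagger L_\alpha}_{\psi_t}\,dt$ from the first-order terms is precisely annihilated by the $+\sum_\alpha\expc{L_\alpha^\dagger L_\alpha}_{\psi_t}\,dt$ emerging from the It\^o correction, leaving a purely martingale-driven equation $d\normpsi=\sum_\alpha\expc{\psi_t|L_\alpha+L_\alpha^\dagger+L_\alpha^\dagger L_\alpha|\psi_t}\,d\wt{N}_\alpha$. Dividing and multiplying the coefficient by $\normpsi$ recovers the stated form with $R_\alpha$ as defined. This is exactly the Dol\'eans-Dade SDE (\ref{doldadeeqn}) with integrand $\sum_\alpha R_\alpha$, so invoking the solution of that equation, with the initial normalization $\expc{\psi_0|\psi_0}=1$, delivers the exponential representation $\normpsi=\EE\left(\sum_\alpha\int_0^t R_\alpha\,d\wt{N}_\alpha\right)$.

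I expect the main subtlety to lie in the bookkeeping of the It\^o correction: one must be careful that the off-diagonal cross terms $d\wt{N}_\alpha d\wt{N}_\beta$ with $\alpha\neq\beta$ genuinely vanish, which relies on the independence of the distinct $N_\alpha$, and that the surviving $dt$ terms cancel exactly rather than only to leading order. This precise cancellation of the residual drift is what certifies that the norm-squared is a bona fide Dol\'eans-Dade process driven by the compensated martingale $\wt{N}$, and not merely an approximate one.
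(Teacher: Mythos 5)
Your computation is correct and is exactly the argument the paper intends: the paper's own proof is a one-line appeal to the It\^o product rule together with $d\wt{N}_\alpha d\wt{N}_\beta=\delta_{\alpha\beta}(d\wt{N}_\alpha+dt)$ and a comparison with the Dol\'eans-Dade equation (\ref{doldadeeqn}), which is precisely the cancellation of the $-\sum_\alpha\expc{\psi_t|L_\alpha^\dagger L_\alpha|\psi_t}\,dt$ drift against the It\^o correction that you carry out explicitly. You have simply filled in the details the paper leaves to the reader, so there is nothing to add.
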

\begin{proof}
This follows directly from It\^o calculus, recalling that the It\^o rules for Poisson processes are
\begin{align*}
    dN_\alpha dN_\beta&=\delta_{\alpha\beta}dN_\alpha,\\
    dN_\alpha dt&=0
\end{align*}
so that $d\wt{N}_\alpha d\wt{N}_\beta=\delta_{\alpha\beta}dN_\alpha=\delta_{\alpha\beta}(d\wt{N}_\alpha+dt)$. That the process is a Dol\'eans-Dade process can be seen by
comparison with equation (\ref{doldadeeqn} )
\end{proof}
To get a normalized unravelling we will have to invert this process and construct a process, $\Phi_t$, such that $\Phi_t^*\Phi_t=\inormpsi$. Then we can multiply the
unnormed solution by $\Phi_t$ to get a normalized solution $\ket{\Psi_t}=\Phi_t\ket{\psi_t}$. In order to do this it is necessary to invert the Dol\'eans-Dade process.
\begin{lemma}
    Let $X_t=\dint{0}{t}f d\wt{N}$.  Assume that with probability $1$ the jumps of $X_t$ are strictly greater than $-1$. Then the inverse of $\EE(X_t)$ is
$$\EE(X_t)^{-1}=\exp\left(\int_0^t\dfrac{f^2}{1+f}ds\right)\EE\left(\int_0^t\dfrac{-f}{1+f}d\wt{N}\right).$$
\label{inverse}
\end{lemma}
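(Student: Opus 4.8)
The plan is to verify the claimed formula \emph{directly}, by showing that the proposed expression multiplies $\EE(X_t)$ to give the constant process $1$. This reduces the whole problem to a single application of the product rule for compensated Poisson-driven Dol\'eans-Dade exponentials, equation (\ref{prodrule}), followed by one algebraic cancellation, so there is no need to manipulate the chaos expansions term by term.

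First I would write the candidate inverse as $Z_t=\exp\left(\int_0^t\frac{f^2}{1+f}ds\right)\EE(Y_t)$, where $Y_t=\int_0^t\frac{-f}{1+f}d\wt{N}$ is the Poissonian martingale whose integrand I denote $g=\frac{-f}{1+f}$. Since the exponential prefactor is deterministic (of finite variation), computing $\EE(X_t)Z_t$ amounts to evaluating the product $\EE(X_t)\EE(Y_t)$ of two Dol\'eans-Dade exponentials, to which (\ref{prodrule}) applies verbatim with $f_s=f$ and $g_s=g$.

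The crux of the argument is the algebraic identity satisfied by the integrand appearing inside the surviving exponential in (\ref{prodrule}): one checks that $f+g+fg=f-\frac{f}{1+f}-\frac{f^2}{1+f}=f-\frac{f(1+f)}{1+f}=0$. Hence the Dol\'eans-Dade factor collapses to $\EE(0)=1$, and the only surviving contribution is the quadratic-covariation prefactor $\exp\left(\int_0^t fg\,\nu(ds)\right)=\exp\left(-\int_0^t\frac{f^2}{1+f}ds\right)$, taking $\nu$ to be Lebesgue measure so that $\nu(ds)=ds$. Multiplying through by the prefactor in $Z_t$ then gives $\EE(X_t)Z_t=\exp\left(\int_0^t\frac{f^2}{1+f}ds\right)\exp\left(-\int_0^t\frac{f^2}{1+f}ds\right)=1$, which is the assertion.

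The point requiring care is not computational but concerns \emph{well-definedness}, and this is exactly where the hypothesis that the jumps of $X_t$ are strictly greater than $-1$ with probability one enters. Because $X_t=\int_0^t f\,d\wt{N}$ jumps by the value of $f$ at each jump time of $N$, that hypothesis is precisely the requirement $1+f>0$ at every jump time. It guarantees simultaneously that the integrand $g=\frac{-f}{1+f}$ of the inverse process is well-defined, that $\EE(X_t)$ is strictly positive and therefore genuinely invertible, and that the integrals in the prefactors converge; once it is in place the computation above goes through unchanged. I would close by noting that the resulting formula agrees with the standard expression for the inverse of a stochastic exponential found in \cite{Prot04}.
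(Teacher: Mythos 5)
Your proof is correct and follows essentially the same route as the paper's: both arguments rest on the product rule (\ref{prodrule}) for Dol\'eans-Dade exponentials, the algebraic identity $f+g+fg=0$ with $g=-f/(1+f)$, and the cancellation of the quadratic-covariation exponential by the prefactor, the only difference being that you verify the stated formula while the paper derives $g$ from the inversion requirement. (One cosmetic remark: since $f$ may be a random adapted process, the prefactor $\exp\bigl(\int_0^t f^2/(1+f)\,ds\bigr)$ is in general not deterministic but merely continuous and of finite variation, which is all the argument actually needs.)
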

\begin{proof}
    From the product rule for Dol\'eans-Dade processes, equation (\ref{prodrule}), we have that if $\EE(Y_t)$ with $Y_t=\dint{0}{t}gd\wt{N}$ is to invert $\EE(X_t)$ then we must have
    $f+g+fg=0$ so that $g=\dfrac{-f}{1+f}$. The exponential term then comes from cancelling out the term that comes from reexpressing the quadratic variation in 
    terms of the compensated Poisson process. We will have $\EE(X_t)\EE(X_t)^{-1}=\EE(0)=1$.
\end{proof}
Letting $S_\alpha=\frac{R_\alpha}{1+R_\alpha}$, then using the preceding lemma we may express the inverse norm-squared process as
$$\inormpsi=\exp\left(\sum_\alpha\int_0^tR_\alpha S_\alpha ds\right)\EE\left(-\sum_\alpha\int_0^tS_\alpha d\wt{N}_\alpha\right).$$
Note that $R_\alpha=\dfrac{\expc{\psi_t\big|(L_\alpha^\dagger+I)(L_\alpha+I)\big|\psi_t}}{\expc{\psi_t|\psi_t}}-1\ge -1$, with equality only in the case $\expc{\psi_t\big|(L_\alpha^\dagger+I)(L_\alpha+I)\big|\psi_t}=0$.
The latter possibility will be addressed after equation (\ref{girsanovthm}).
The inverse norm-squared satisfies the following SDE,
$$d\inormpsi=\left[\sum_\alpha R_\alpha S_\alpha dt-S_\alpha d\wt{N}_\alpha\right]\inormpsi.$$
Multiplication of the inverted norm-squared process by the unnormalized density matrix $\rho_\psi(t)=\state{\psi_t}$ results in a normalized density matrix $\rho_\Psi(t)=\state{\Psi_t}$. 
We can think of the norm-squared process as the Radon-Nikodym derivative for a change of measure. Then we have
$$\Expc\rho_\psi(t)=\Expc\normpsi\rho_\Psi(t)=\Expc^\prime\rho_\Psi(t),$$
where we have changed from a probability space $\mathcal{P}=(\Omega,\filt,\filt_t,\mathbb{P})$ to the probability space $\mathcal{P}^\prime=(\Omega,\filt,\filt_t,\mathbb{Q})$, denoting
the expected value with respect to $\mathbb{Q}$ as $\Expc^\prime$. The Radon-Nikodym derivative is then 
$\normpsi=\Expc\left[\dfrac{d\mathbb{Q}}{d\mathbb{P}}\bigg|\filt_t\right]$. The Girsanov-Meyer theorem for jump processes gives new martingales in the primed probability space, 
\begin{equation}
\wt{N}_\alpha^\prime=\wt{N}_\alpha-\dint{0}{t}\expc{\psi_{s-}|\psi_{s-}}^{-1} d\langle\wt{N}_\alpha,\expc{\psi_s|\psi_s}\rangle=\wt{N}_\alpha-\dint{0}{t}R_\alpha ds,
\label{girsanovthm}
\end{equation}
(see e.g. \cite{Prot04}) where $\langle\wt{N}_\alpha,\expc{\psi_s|\psi_s}\rangle$ is the angle-bracket process which is the compensator of the quadratic covariation process $[\wt{N}_\alpha,\expc{
\psi_s|\psi_s}]$. This will allow us to calculate the new expected values, $\Expc^\prime d\wt{N}_\alpha=\Expc^\prime R_\alpha dt$ and $\Expc^\prime dN_\alpha=\Expc^\prime(1+R_\alpha)dt$.
Note that in the case $R_\alpha=-1$, the probability of a jump goes to zero, so the jump part of the integral in Lemma \ref{inverse} is actually a.s. zero, leaving only the time part of the
compensated Poisson integral, which when combined with the coefficient process will take care of any potential singularities.
\par
The normalized density matrix, $\rho_\Psi(t)$, will satisfy the SDE
\begin{align*}
d\rho_\Psi(t) &= \lind[\rho_\Psi(t)]dt+\left(\sum_\alpha R_\alpha S_\alpha dt-S_\alpha d\wt{N}_\alpha\right)\rho_\Psi(t)\\
&+\left(\sum_\alpha L_\alpha\rho_\Psi(t)+\rho_\Psi(t)L_\alpha^\dagger+L_\alpha\rho_\Psi(t)L^\dagger_\alpha\right)(d\wt{N}_\alpha-S_\alpha dN_\alpha),
\end{align*}
which can be shown by applying the It\^o product rule to $\rho_\Psi(t)=\inormpsi\state{\psi_t}$. In $\mathcal{P}^\prime$, this density matrix will integrate the master equation 
in the sense that $\Expc^\prime d\rho_\Psi(t)=\lind\left[\Expc^\prime \rho_\Psi(t)\right]dt$. However, in order to get the benefit of an unravelling, an SDE 
for the state vector is required to reduce the number of dimensions. Having the inverse norm-squared process, one can find the process, $\Phi_t$, which will normalize the state vector, 
$\ket{\psi_t}$.
\begin{lemma}
    Let $c_\alpha=\dfrac{1}{\sqrt{1+R_\alpha}}$ and define
    \begin{equation}
        \Phi_t=\exp\left(\sum_\alpha\int_0^t\left[\frac{1}{2}(R_\alpha-2)+c_\alpha\right] ds\right)\EE\left(\sum_\alpha\int_0^t\left(c_\alpha-1\right) d\wt{N}_\alpha\right).
    \end{equation}
    Then $\Phi_t^*\Phi_t=\inormpsi$.
    \label{philemma}
\end{lemma}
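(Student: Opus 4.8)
The plan is to exploit the fact that $\Phi_t$ is a scalar process built from real integrands, so that taking the adjoint reduces to squaring. Since $R_\alpha = \expc{\psi_t|(L_\alpha^\dagger+I)(L_\alpha+I)|\psi_t}/\normpsi - 1$ is real, so is $c_\alpha=(1+R_\alpha)^{-1/2}$; hence both the exponential prefactor and the Dol\'eans-Dade factor $\EE\left(\sum_\alpha\int_0^t(c_\alpha-1)\,d\wt{N}_\alpha\right)$ are real, giving $\Phi_t^*\Phi_t=\Phi_t^2$. The target is the explicit inverse norm-squared process derived right after Lemma \ref{inverse}, namely $\inormpsi=\exp\left(\sum_\alpha\int_0^t R_\alpha S_\alpha\,ds\right)\EE\left(-\sum_\alpha\int_0^t S_\alpha\,d\wt{N}_\alpha\right)$ with $S_\alpha=R_\alpha/(1+R_\alpha)$, so it suffices to match $\Phi_t^2$ against this closed form.

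First I would square the Dol\'eans-Dade factor using the product rule (\ref{prodrule}) for compensated Poisson-driven exponentials, with $f=g=c_\alpha-1$ on each independent channel $\alpha$ (independence of the $N_\alpha$ kills the cross terms in the quadratic covariation). The resulting jump integrand is $2(c_\alpha-1)+(c_\alpha-1)^2=c_\alpha^2-1$, and since $c_\alpha^2=(1+R_\alpha)^{-1}$ this equals $-S_\alpha$; hence the Dol\'eans-Dade part of $\Phi_t^2$ is exactly $\EE\left(-\sum_\alpha\int_0^t S_\alpha\,d\wt{N}_\alpha\right)$, already matching the inverse norm process. The same product rule also generates a deterministic factor $\exp\left(\sum_\alpha\int_0^t(c_\alpha-1)^2\,ds\right)$ coming from re-expressing $N_\alpha(ds)=d\wt{N}_\alpha+ds$.

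It then remains to reconcile the exponential prefactors. Collecting the doubled prefactor of $\Phi_t^2$, namely $\exp\left(\sum_\alpha\int_0^t[(R_\alpha-2)+2c_\alpha]\,ds\right)$, together with the factor $\exp\left(\sum_\alpha\int_0^t(c_\alpha-1)^2\,ds\right)$ from the previous step, the combined integrand is $(R_\alpha-2)+2c_\alpha+(c_\alpha-1)^2=R_\alpha-1+c_\alpha^2$. Substituting $c_\alpha^2=(1+R_\alpha)^{-1}$ and placing over the common denominator $1+R_\alpha$ collapses this to $R_\alpha^2/(1+R_\alpha)=R_\alpha S_\alpha$. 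Thus $\Phi_t^2=\exp\left(\sum_\alpha\int_0^t R_\alpha S_\alpha\,ds\right)\EE\left(-\sum_\alpha\int_0^t S_\alpha\,d\wt{N}_\alpha\right)=\inormpsi$, as claimed.

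I expect the main obstacle to be organizational rather than conceptual: correctly tracking the deterministic exponential term generated when the quadratic-covariation contribution $\int_0^t(c_\alpha-1)^2 N_\alpha(ds)$ is split into its compensated and drift parts, and not double-counting it against the drift already built into $\Phi_t$. The definition of $\Phi_t$ has evidently been reverse-engineered precisely so that these two exponential contributions combine to $R_\alpha S_\alpha$, so the argument is really a consistency check; the only places that demand care are the algebraic identity $R_\alpha-1+(1+R_\alpha)^{-1}=R_\alpha S_\alpha$ and the reality of all coefficient processes, which is what legitimizes the reduction $\Phi_t^*\Phi_t=\Phi_t^2$.
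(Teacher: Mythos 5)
Your verification is correct and uses exactly the machinery of the paper's own proof: the product rule (\ref{prodrule}) for Dol\'eans-Dade exponentials, the identity $2(c_\alpha-1)+(c_\alpha-1)^2=c_\alpha^2-1=-S_\alpha$, and the matching of exponential prefactors to $R_\alpha S_\alpha$; the only difference is that you run the computation as a check of the stated formula, whereas the paper runs it as a derivation from the ansatz $\Phi_t=\mathcal{N}\EE(X_t)$ by solving $2f_\alpha+f_\alpha^2=-S_\alpha$. This is essentially the same argument, and your explicit appeal to the reality of $R_\alpha$ and $c_\alpha$ to justify $\Phi_t^*\Phi_t=\Phi_t^2$ is a point the paper leaves implicit.
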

\begin{proof}
    This follows again from the product rule for Dol\'eans-Dade processes. We are looking for $\Phi_t$ to be in the form $\Phi_t=\mathcal{N}_t\EE(X_t)$, with 
    $X_t=\sum_\alpha\dint{0}{t}f_\alpha d\wt{N}_\alpha$ an unknown martingale, $f_\alpha$ a real valued adapted process, and $\mathcal{N}_t$ a factor which we can use to make the coefficient 
    processes work out. 
    The requirement is then
    $$|\mathcal{N}_t|^2\EE(X_t)^*\EE(X_t)=\mathcal{M}_t\EE(Y_t)$$
    where 
    $$\mathcal{M}_t=\exp\left(\sum_\alpha\int_0^tR_\alpha S_\alpha ds\right).$$
    Then $f_\alpha$ must satisfy $2f_\alpha+f_\alpha^2=-S_\alpha$. Simple algebra requires $f_\alpha=-1\pm\sqrt{1-S_\alpha}=-1\pm c_\alpha$, with $c_\alpha=\dfrac{1}{\sqrt{1+R_\alpha}}$.
    We will take $f_\alpha=-1+c_\alpha$, so to compensate $\mathcal{M}_t$ and
    the terms that come from reexpressing the quadratic variation term, $1-2c_\alpha+c_\alpha^2$ in terms of the compensated Poisson process, we must have 
    $$\mathcal{N}_t=\exp\left(\sum_\alpha\int_0^t\frac{1}{2}(R_\alpha-2)+c_\alpha ds\right),$$
    This shows that $\Phi_t$ has the required form.
\end{proof}
Having the form of $\Phi_t$, we can write down the SDE it satsifies,
\begin{equation}
    d\Phi_t=\left[\sum_\alpha\left(\frac{1}{2}(R_\alpha-2)+c_\alpha\right)dt+(c_\alpha-1)d\wt{N}_\alpha\right]\Phi_t.
    \label{phisde}
\end{equation}
Finally, the SDE for the normalized unravelling can be obtained.
\begin{theorem}
    The process $\ket{\Psi_t}=\Phi_t\ket{\psi_t}$, satisfying the SDE
    \begin{align}
        d\ket{\Psi_t}=-\left(iH+\frac{1}{2}\right.&\left.\sum_\alpha L_\alpha^\dagger L_\alpha+2L_\alpha+1-(R_\alpha+1)\right)dt\ket{\Psi_t}\nonumber\\
        &+\sum_\alpha\left(c_\alpha(L_\alpha+I)-I\right)dN_\alpha\ket{\Psi_t},
        \label{normedsde}
    \end{align}
    is a normalized unravelling of the GKSL equation, in the sense that $$\Expc^\prime d\left(\state{\Psi_t}\right)=\lind\left[\Expc^\prime \state{\Psi_t}\right]dt.$$
\end{theorem}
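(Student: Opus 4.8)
The plan is to establish the two assertions of the theorem separately: first derive the stochastic differential equation (\ref{normedsde}) for $\ket{\Psi_t}=\Phi_t\ket{\psi_t}$ directly from the product rule, and then observe that the unravelling property is almost immediate once we identify $\state{\Psi_t}$ with the density matrix $\rho_\Psi(t)$ whose unravelling property was established above.

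For the first part, I would apply the It\^o product rule for jump processes,
$$d\ket{\Psi_t}=(d\Phi_t)\ket{\psi_t}+\Phi_t(d\ket{\psi_t})+(d\Phi_t)(d\ket{\psi_t}),$$
keeping all three terms (the quadratic-covariation cross term being the essential feature absent from ordinary calculus), and substitute the known SDEs (\ref{phisde}) for $\Phi_t$ and (\ref{linsde}) for $\ket{\psi_t}$. Since $\Phi_t$ is a scalar (real, adapted) process---its coefficients $R_\alpha$ and $c_\alpha=1/\sqrt{1+R_\alpha}$ are real expectation-value processes---it commutes with the system operators $L_\alpha$, so the cross term is governed entirely by the product of the two noise parts. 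Evaluating it with the Poisson It\^o rule $d\wt{N}_\alpha d\wt{N}_\beta=\delta_{\alpha\beta}(d\wt{N}_\alpha+dt)$ produces a contribution $\sum_\alpha(c_\alpha-1)L_\alpha(d\wt{N}_\alpha+dt)\ket{\psi_t}$. After factoring out $\Phi_t\ket{\psi_t}=\ket{\Psi_t}$ and collecting the surviving $d\wt{N}_\alpha$ terms, the noise coefficient collapses to $\sum_\alpha\big(c_\alpha(L_\alpha+I)-I\big)$; converting $d\wt{N}_\alpha=dN_\alpha-dt$ then moves the remaining pieces into the drift and yields (\ref{normedsde}).

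For the second part, I would invoke Lemma \ref{philemma}. Because $\Phi_t$ is real and scalar, $\state{\Psi_t}=\Phi_t^*\Phi_t\,\state{\psi_t}=\inormpsi\state{\psi_t}=\rho_\Psi(t)$, and in particular $\langle\Psi_t|\Psi_t\rangle=\inormpsi\normpsi=1$, so $\ket{\Psi_t}$ is genuinely normalized and $\state{\Psi_t}$ is a bona fide density matrix. Since $\state{\Psi_t}$ and $\rho_\Psi(t)$ agree pathwise, their stochastic differentials coincide, and the unravelling property $\Expc^\prime d(\state{\Psi_t})=\lind[\Expc^\prime\state{\Psi_t}]dt$ follows immediately from the already-established identity $\Expc^\prime d\rho_\Psi(t)=\lind[\Expc^\prime\rho_\Psi(t)]dt$ in the measure $\mathcal{P}^\prime$.

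The main obstacle is the bookkeeping in the first part: keeping straight which factors are operator-valued (the $L_\alpha$) and which are scalar (the $\Phi_t$, $R_\alpha$, $c_\alpha$), correctly retaining the quadratic-covariation cross term, and carefully converting between the compensated differential $d\wt{N}_\alpha$ and the physical differential $dN_\alpha$ so that the drift ends up in the displayed form. One should also confirm that the Girsanov change of measure from $\mathbb{P}$ to $\mathbb{Q}$ given in (\ref{girsanovthm}), which supplies $\Expc^\prime dN_\alpha=(1+R_\alpha)dt$, is exactly what makes the averaged drift reproduce the Lindbladian $\lind$; the degenerate case $R_\alpha=-1$, noted after (\ref{girsanovthm}), must be handled as there so that no spurious singularity appears.
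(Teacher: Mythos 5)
Your proposal is correct and follows exactly the paper's route: the SDE is obtained from the It\^o product rule $d(\Phi_t\ket{\psi_t})=(d\Phi_t)\ket{\psi_t}+\Phi_t(d\ket{\psi_t})+(d\Phi_t)(d\ket{\psi_t})$ applied to equations (\ref{linsde}) and (\ref{phisde}), and the unravelling property follows from Lemma \ref{philemma} together with the Girsanov change of measure identifying $\state{\Psi_t}$ with $\rho_\Psi(t)$. You simply spell out the bookkeeping (the coefficient collapse to $c_\alpha(L_\alpha+I)-I$ and the $d\wt{N}_\alpha\to dN_\alpha$ conversion) that the paper leaves implicit.
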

\begin{proof}
    That $\ket{\Psi_t}$ is an unravelling follows from Lemma \ref{philemma} and the Girsanov theorem defining the change of measure, making this unravelling equivalent to the
    unnormalized one. That $\ket{\Psi_t}$ satisfies
    equation \ref{normedsde} follows from the It\^o product rule, using equations \ref{linsde} and \ref{phisde}.
\end{proof}
This completes the derivation of a normalized unravelling from a unitary evolution where the bosonic reservoir has been interpreted through
the isomorphism, $\Theta$, as resulting in Poissonian noise. This equation has a different form then equation \ref{PDP}, but we have the following corollary.
\begin{corollary}
    The normalized unravelling SDE can be written as the canonical PDP,
    \begin{align}
        d\ket{\Psi_t}&=-\left(iH^\prime+\frac{1}{2}\sum_\alpha\left( M_\alpha^\dagger M_\alpha-\norm{M_\alpha\Psi_t}^2\right)\right)dt\ket{\Psi_t}\\
        &+\sum_\alpha\left(\frac{M_\alpha}{\norm{M_\alpha\Psi_t}}-I\right)dN_\alpha \ket{\Psi_t}.
        \label{canonpdpeqn}
    \end{align}
    \label{canonpdp}
\end{corollary}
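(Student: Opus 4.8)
The plan is to produce the canonical form by \emph{guessing} the jump operators and the shifted Hamiltonian directly from (\ref{normedsde}) and then checking that both the jump and the drift terms agree. Comparing the $dN_\alpha$ coefficient $c_\alpha(L_\alpha+I)$ in (\ref{normedsde}) with $M_\alpha/\norm{M_\alpha\Psi_t}$ in (\ref{canonpdpeqn}) makes the correct choice transparent: take $M_\alpha=L_\alpha+I$.

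First I would pin down the single identity on which the whole matching rests, namely that $\ket{\Psi_t}$ is a unit vector and $\norm{M_\alpha\Psi_t}^2=1+R_\alpha$. Since $\ket{\Psi_t}=\Phi_t\ket{\psi_t}$ with $\Phi_t$ a scalar-valued process obeying $\Phi_t^*\Phi_t=\inormpsi$ (Lemma \ref{philemma}), one has $\expc{\Psi_t|\Psi_t}=\Phi_t^*\Phi_t\normpsi=1$ and, for any operator $A$, $\expc{\Psi_t|A|\Psi_t}=\expc{\psi_t|A|\psi_t}/\normpsi$. Setting $A=M_\alpha^\dagger M_\alpha=(L_\alpha^\dagger+I)(L_\alpha+I)$ and invoking the identity $1+R_\alpha=\expc{\psi_t|(L_\alpha^\dagger+I)(L_\alpha+I)|\psi_t}/\normpsi$ recorded after Lemma \ref{inverse} yields $\norm{M_\alpha\Psi_t}^2=1+R_\alpha$, hence $\norm{M_\alpha\Psi_t}=1/c_\alpha$ by the definition $c_\alpha=(1+R_\alpha)^{-1/2}$. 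The jump terms then coincide term by term, since $M_\alpha/\norm{M_\alpha\Psi_t}=c_\alpha(L_\alpha+I)$.

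It remains to match the drifts. Expanding $M_\alpha^\dagger M_\alpha-\norm{M_\alpha\Psi_t}^2=L_\alpha^\dagger L_\alpha+L_\alpha^\dagger+L_\alpha-R_\alpha$ and simplifying the drift generator of (\ref{normedsde}) to $iH+\frac{1}{2}\sum_\alpha(L_\alpha^\dagger L_\alpha+2L_\alpha-R_\alpha)$, I find that the generator of (\ref{normedsde}) exceeds that of (\ref{canonpdpeqn}) with $H'=H$ by exactly $\frac{1}{2}\sum_\alpha(2L_\alpha-L_\alpha^\dagger-L_\alpha)=\frac{1}{2}\sum_\alpha(L_\alpha-L_\alpha^\dagger)$.

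The one step I expect to be the crux is recognizing that this residual is \emph{anti-Hermitian} and so can be absorbed into the unitary generator without spoiling self-adjointness. Defining
\begin{equation*}
H'=H-\frac{i}{2}\sum_\alpha\left(L_\alpha-L_\alpha^\dagger\right),
\end{equation*}
one verifies $(H')^\dagger=H'$ from $H^\dagger=H$ and $(L_\alpha-L_\alpha^\dagger)^\dagger=-(L_\alpha-L_\alpha^\dagger)$, while $iH'=iH+\frac{1}{2}\sum_\alpha(L_\alpha-L_\alpha^\dagger)$ supplies precisely this excess. With this $H'$ and $M_\alpha=L_\alpha+I$ the two drift generators coincide, so (\ref{normedsde}) and (\ref{canonpdpeqn}) are identical, completing the argument. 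Everything beyond the anti-Hermiticity observation is routine operator algebra together with the bookkeeping already furnished by Lemmas \ref{inverse} and \ref{philemma}.
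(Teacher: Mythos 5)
Your proof is correct and follows essentially the same route as the paper: substitute $M_\alpha=L_\alpha+I$, note $\norm{M_\alpha\Psi_t}^2=1+R_\alpha$, and absorb the anti-Hermitian residual $\tfrac{1}{2}\sum_\alpha(L_\alpha-L_\alpha^\dagger)$ into $H'=H+\tfrac{1}{2i}\sum_\alpha(M_\alpha-M_\alpha^\dagger)$, which is exactly the paper's definition. The only difference is presentational: you verify the drift matching by direct computation, while the paper invokes the gauge invariance of $\lind$ and the HP equation under $L_\alpha\mapsto L_\alpha+f_\alpha I$, a remark that additionally makes explicit that the rewritten equation unravels the \emph{same} GKSL generator.
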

\begin{proof}
The substitution $M_\alpha=L_\alpha+I$ gives the desired equation. Note that $\lind$ and the HP equation are invariant under the transformations $L_\alpha \mapsto L_\alpha+f_\alpha I$ and 
$H\mapsto H+\frac{1}{2i}\sum_\alpha f_\alpha^*L_\alpha-f_\alpha L_\alpha^\dagger$, for constant (or generally locally time square-integrable) functions $f_\alpha$ \cite{ParDevi17}. By making this 
substitution and defining $H^\prime=H+\frac{1}{2i}\sum_\alpha M_\alpha-M_\alpha^\dagger$, we obtain an equivalent unravelling of the GKSL equation.
\end{proof}
We further note that if $\sum_\alpha M_\alpha^\dagger M_\alpha=1$ as is done in GRW or if $M_\alpha$ are projectors into the eigenspaces of a self-adjoint operator, 
the resulting equation simplifies to
\begin{align}
        d\ket{\Psi_t}&=-iH\ket{\Psi_t}dt+\sum_\alpha\left(\frac{M_\alpha}{\norm{M_\alpha\Psi_t}}-I\right)dN_\alpha \ket{\Psi_t},
        \label{qmsleq}
\end{align}
which is a Schr\"odinger evolution augmented by, in the case of eigenprojetors, jumps into the eigenspaces defined by $M_\alpha$ and where the probability, $p_\alpha(t)$, of a jump into the 
$\alpha$-th eigenspace in the time interval $[t,t+dt]$ is given by the rate of the Poisson process, $p_\alpha(t)dt$. Notably, it will satisfy the Born rule 
$p_\alpha(t)=\norm{M_\alpha\Psi_t}^2$.

\section{A note on the Girsanov transformation}

In order that the normalized purely system state evolution unravel the GKSL equation, we changed the original probability measure $\mathbb{P}$ to a new probability measure $\mathbb{Q}$ 
(Girsanov transformation), using the norm-squared process as the Radon-Nikodym derivative.  The operator $G[\psi_t]$ of multiplication by $\Phi_t$ establishes a unitary isomorphism of the Hilbert 
spaces $L^2(\wt{N})$ and $L^2(\wt{N}^\prime)$. 
To see this we denote the inner product in the range space by $(\cdot,\cdot)^\prime$, and the inner product in the domain as $(\cdot,\cdot)$.  We have:
\begin{align*}
    (Gf,Gg)^\prime&=\left(\Phi_tf,\Phi_tg\right)^\prime=\Expc^\prime\inormpsi\expc{f|g}\\
    &=\Expc\normpsi\inormpsi\expc{f|g}=\Expc\expc{f|g}=(f,g)
\end{align*}
Just as we did for the inverse norm-squared process we can explicitly calculate the square root of the norm-squared process to get the norm process, $\phi_t$.
\begin{proposition}
The square root of the process $\normpsi$ is the process $\phi_t$ given by
\begin{equation}
\phi_t=\exp\left(-\frac{1}{2}\sum_\alpha\int_0^t\left(\norm{M_\alpha\Psi_s}-1\right)^2ds\right)\EE\left(\sum_\alpha\int_0^t\norm{M_\alpha\Psi_s}-1d\wt{N}_\alpha\right).
\label{normeq}
\end{equation}

\end{proposition}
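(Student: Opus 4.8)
The plan is to follow the template already established in the proofs of Lemma~\ref{inverse} and Lemma~\ref{philemma}: write $\phi_t$ in the factored Dol\'eans-Dade form $\phi_t=\mathcal{N}\,\EE(X_t)$ with $X_t=\sum_\alpha\int_0^t b_\alpha\,d\wt{N}_\alpha$, where $b_\alpha=\norm{M_\alpha\Psi_s}-1$ is a real-valued adapted process and $\mathcal{N}=\exp\left(-\tfrac{1}{2}\sum_\alpha\int_0^t b_\alpha^2\,ds\right)$, and then verify directly that $\phi_t^2$ coincides with the Dol\'eans-Dade representation $\normpsi=\EE\left(\sum_\alpha\int_0^t R_\alpha\,d\wt{N}_\alpha\right)$ supplied by the norm-squared Lemma. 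Since everything in sight is real, $\phi_t^*\phi_t=\phi_t^2$, so it suffices to identify the square.

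First I would record the algebraic identity linking $b_\alpha$ to the coefficient $R_\alpha$. Because $\ket{\Psi_t}=\Phi_t\ket{\psi_t}$ is normalized and $M_\alpha=L_\alpha+I$, multiplication by the scalar $\Phi_t$ (with $\Phi_t^*\Phi_t=\inormpsi$) gives $\norm{M_\alpha\Psi_t}^2=\inormpsi\,\expc{\psi_t|(L_\alpha^\dagger+I)(L_\alpha+I)|\psi_t}$, which by the remark following the norm-squared Lemma is exactly $R_\alpha+1$. Hence $b_\alpha=\norm{M_\alpha\Psi_s}-1$ satisfies $2b_\alpha+b_\alpha^2=(b_\alpha+1)^2-1=\norm{M_\alpha\Psi_s}^2-1=R_\alpha$. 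This is the bookkeeping identity that drives the whole computation.

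Next I would compute $\phi_t^2=\mathcal{N}^2\,\EE(X_t)\EE(X_t)$ by the product rule for Dol\'eans-Dade processes, equation~(\ref{prodrule}). Since the $\wt{N}_\alpha$ are independent, the cross quadratic covariations vanish and the rule applies coordinate by coordinate, yielding $\EE(X_t)\EE(X_t)=\exp\left(\sum_\alpha\int_0^t b_\alpha^2\,ds\right)\EE\left(\sum_\alpha\int_0^t(2b_\alpha+b_\alpha^2)\,d\wt{N}_\alpha\right)$. Substituting $2b_\alpha+b_\alpha^2=R_\alpha$ turns the argument of the final exponential into precisely the martingale generating $\normpsi$, while the exponential prefactor $\exp\left(\sum_\alpha\int_0^t b_\alpha^2\,ds\right)$ is cancelled exactly by $\mathcal{N}^2=\exp\left(-\sum_\alpha\int_0^t b_\alpha^2\,ds\right)$. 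What survives is $\phi_t^2=\EE\left(\sum_\alpha\int_0^t R_\alpha\,d\wt{N}_\alpha\right)=\normpsi$, which is the claim.

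The computation is essentially routine once the factored form is written down, so I do not expect a deep obstacle; the single point that demands care is the drift bookkeeping. The normalization exponent $-\tfrac{1}{2}b_\alpha^2$ must be chosen so that, after the quadratic covariation $[X,X]_t=\sum_\alpha\int_0^t b_\alpha^2\,dN_\alpha$ is re-expressed through the compensated process via $dN_\alpha=d\wt{N}_\alpha+ds$, the resulting $ds$-drift is annihilated. This is the same mechanism fixing the prefactors in Lemmas~\ref{inverse} and~\ref{philemma}, and getting its sign and the factor $\tfrac{1}{2}$ right is where an error would most plausibly enter. Should one prefer to bypass the product rule, an equivalent route is to apply the It\^o product rule to $\phi_t^2$ and check that it satisfies the SDE $d\normpsi=\sum_\alpha R_\alpha\,d\wt{N}_\alpha\normpsi$ of the norm-squared Lemma, with the matching initial value $\phi_0^2=\normpsi\big|_{t=0}=1$.
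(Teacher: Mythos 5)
Your proposal is correct and follows exactly the route the paper intends: its proof of this Proposition simply says the calculation is ``exactly analogous to Lemma~\ref{philemma}'' (or, alternatively, invert $\Phi_t$ via Lemma~\ref{inverse}), and your computation---the factored Dol\'eans-Dade ansatz, the identity $\norm{M_\alpha\Psi_t}^2=R_\alpha+1$ giving $2b_\alpha+b_\alpha^2=R_\alpha$, and the cancellation of the drift prefactor via the product rule (\ref{prodrule})---is precisely that analogous calculation carried out in full. No gaps.
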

\begin{proof}
The calculation to obtain this expression is exactly analogous to Lemma \ref{philemma}. Alternatively one can use the inversion formula of Lemma \ref{inverse} on the process
$\Phi_t$.
\end{proof}
\begin{figure}
\centering
\begin{tikzcd}
    L^2(\wt{N}_{\mathfrak{h}};\hilb_0)
    \arrow[r,"\wt{U}_t"]
    & 
    L^2(\wt{N}_{\mathfrak{h}};\hilb_0)
    \arrow[r,"G{[\psi_t]}"]
    &
    L^2(\wt{N}_{\mathfrak{h}}^\prime;\hilb_0)
    \arrow[d,"{\Theta^\prime}^{-1}"]
    \\
    \hilb_0\otimes\Gamma(\mathfrak{h})
    \arrow[r, "U_t"] 
    \arrow[u,"\Theta"]
    & 
    \hilb_0\otimes\Gamma(\mathfrak{h})
    \arrow[r, "\widehat{G}{[\psi_t]}"]
    \arrow[u,"\Theta"]
    & 
    \hilb_0\otimes\Gamma(\mathfrak{h}) 
\end{tikzcd}
\caption{Diagram of dynamical relationship and normalization procedure}
\label{cd2}
\end{figure}

Just as the space $L^2(\wt{N})$ is the space of square-integrable functionals of the compensated Poisson process $\wt{N}$, elements of $L^2(\wt{N}^\prime)$ are square-integrable functionals 
of the compensated Poisson process $\wt{N}'$, obtained from $\wt{N}$ by the Girsanov theorem.  Both $L^2$ spaces are isomorphic to the quantum state space  $\hilb_0\otimes\Gamma(\mathfrak{h})$ via 
chaos decomposition of functionals of the Poisson processes.  The multiplication operator $G[\psi_t]$ introduced above can thus be represented as a unitary automorphism of this space.  More precisely,  
denoting by $\Theta':  \hilb_0\otimes\Gamma(\mathfrak{h}) \to  L^2(\wt{N}'_{\mathfrak{h}};\hilb_0)$ the unitary isomorphism between the quantum state space and the space of square-integrable 
functionals of the new (compensated) Poisson process, and recalling the isomorphism $\Theta:  \hilb_0\otimes\Gamma(\mathfrak{h}) \to  L^2(\wt{N}_{\mathfrak{h}};\hilb_0)$, we define the map
$$
\widehat{G}[\psi_t]={\Theta^{\prime}}^{-1}G[\psi_t]\Theta.
$$
$\widehat{G}[\psi_t]$ is a unitary automorphism of the quantum state space, as diagrammed in figure \ref{cd2}.  We emphasize, that $\widehat{G}[\psi_t]$ depends on the choice of the initial 
condition in the HP equation \ref{hpeqn}.
Since the 
inverse norm process $\Phi_t$ is a stochastic integral with an integrand which is random, it does not in general have a simple chaos decomposition.  Consequently, the corresponding operator 
$\widehat{G}[\psi_t]$ is difficult to describe explicitly.  It is the probabilistic representation of the bosonic Fock space that makes the definition of $\widehat{G}[\psi_t]$ possible and natural. 
In the 
context of Wiener chaos representation, this idea was advanced by Parthasarathy and Usha Devi \cite{ParDevi17}, defining a ``generalized Weyl process'' and using it in conjunction with the 
HP evolution to solve the Gisin-Percival equation.  The fruitful idea of generalizing operators in the Fock space using its probabilistic representation is worth exploring further.  

\section{Applications of the stochastic Schr\"odinger equation}

Equation (\ref{qmsleq}) is a stochastic Schr\"odinger equation.  For a certain choice of operators $M_\alpha$ and, with the discrete index $\alpha$ replaced by a continuous parameter, 
it may be thought of as a SDE describing the GRW interpretation \cite{GRW85}.  The random process driving the dynamics leads to the dynamical derivation of the standard Born rule.  Thus 
equation (\ref{qmsleq})
may be thought of either from the perspective of open quantum systems, as an unravelling of a particular Lindblad equation, or, from the perspective of GRW as a fundamental equation which
ultimately gives rise to the consistency of our observable world through a dynamical selection of random measurement results.  
\par
For either interpretation, it is important to derive equations for the evolution of expected values of observables. For any observable $O$ we have
to the equation
$$d\expc{\Psi_t|O|\Psi_t}=(d\bra{\Psi_t})O\ket{\Psi_t}+\bra{\Psi_t}O(d\ket{\Psi_t})+(d\bra{\Psi_t})O(d\ket{\Psi_t})$$
Performing some cancellations, one obtains
$$d\expc{\Psi_t|O|\Psi_t}=\expc{\Psi_t|i[H,O]|\Psi_t}+\sum_\alpha\left( \dfrac{\expc{\Psi_t|M_\alpha OM_\alpha|\Psi_t}}{\norm{M_\alpha\ket{\Psi_t}}^2}-\expc{\Psi_t|O|\Psi_t}\right)dN_\alpha.$$
When $O =H$, the Hamiltonian, the commutator is zero and, taking into account the distribution of $N_{\alpha}$, it follows after some simplification that
$$\Expc d\expc{\Psi_t|H|\Psi_t}=\sum_\alpha\left(\Expc\expc{\Psi_t|M_\alpha HM_\alpha|\Psi_t}-\Expc\expc{\Psi_t|H|\Psi_t}\right)dt.$$
The first term is diagonal in the eigenbasis of the $M_\alpha$'s in the case that $M_\alpha$ are eigenprojectors of a self-adjoint operator, while the second term is then responsible for decay 
of the off-diagonal terms, leading to decoherence. It is a linear first-order inhomogeneous differential equation and so if we denote $\expc{H}_t=\Expc\expc{\Psi_t|H|\Psi_t}$ and
$\expc{H^\prime}_t=\Expc\expc{\Psi_t|M_\alpha HM_\alpha|\Psi_t}$, it has the solution
$$\expc{H}_t=\int_0^te^{s-t}\expc{H^\prime}_sds+\expc{H}_0e^{-t},$$
showing that after large times the system loses memory of its original expected energy and the projected term $\expc{H^\prime}_t$ becomes dominant.
\par
The master equation of GRW (see \cite{BG03} for an in-depth treatment) is 
$$\dfrac{d\rho(t)}{dt}=-i[H,\rho(t)]+\lambda\left(T[\rho(t)]-\rho(t)\right)$$
where $\lambda$ is the rate of jumps and 
$$T[\rho]=\int d^3\mathbf{x} L_{\mathbf{x}}\rho L_{\mathbf{x}}$$
where their localization operator, $L_{\mathbf{x}}$, is defined as
$$L_{\mathbf{x}}=\left(\dfrac{a}{\pi}\right)^{3/4}e^{-(a/2)(\hat{q}-\mathbf{x})^2}$$
for some value of the phenomenological parameter $a$ (typically $\alpha$ is used), where $\hat{q}$ denotes the position operator. The parameter $\mathbf{x}$ in the GRW formulation plays the role 
analogous to the index $\alpha$ in the present article, labelling possible localizations of the particle in space.  Note that $\dint{}{}d^3\mathbf{x}L_{\mathbf{x}}^2=1$.  Since we are using a 
standard Poisson process, the analogue of $\lambda$ is equal to $1$ in our case.  In the GRW formulation, $\norm{L_{\mathbf{x}}\ket{\Psi_t}}^2$ represents the probability density that a jump localizes 
the system around position $\mathbf{x}$.  Again, up to changing a discrete index for a continuous parameter, this is captured by equation (\ref{qmsleq}) of the present work.
\par
Going further, it is possible to describe the requirements for how a GRW unravelling may come about. We may consider a sequence of points $\{\mathbf{x}_\alpha\}$ which in the limit of 
$\alpha\rightarrow\infty$ densely fill the space $\rr^3$. We can define intermediary Lindblad operators 
$$L_{\alpha}^{(k)}=\left(\dfrac{a}{\pi}\right)^{3/4}e^{-(a/2)(\hat{q}-\mathbf{x}_\alpha)^2}\sqrt{\delta_k}$$
where $\delta_k$ is a small parameter playing the role of the differential $d\mathbf{x}$ in the limit so that $\sum_{\alpha=1}^k {L_\alpha^{(k)}}^2\rightarrow\dint{}{}d^3
\mathbf{x}L_{\mathbf{x}}^2=1$ as $k\rightarrow\infty$. For each $k$ we have
a HP solution $U_t^{(k)}$ for the set of $k$ Lindblad operators $\{L_\alpha^{(k)}\}$, which we can use to derive an intermediary GRW unravelling. We describe this in the
following theorem, and note that there is one caveat pertaining to the procedure of going from $L_\alpha^{(k)}+I$ to $L_\alpha^{(k)}$ but we will see this can be addressed.
\begin{theorem}
Consider all process to be defined on the compact interval $[0,T]$. Let $\{L_\alpha^{(k)}\}$ be the Lindblad localization operators associated the sequence $\{\mathbf{x}_\alpha\}$ which fills 
$\mathbb{R}^3$. Then if $\sum_\alpha {L_\alpha^{(k)}}^2\rightarrow I$,
and we choose noises $A_\alpha^{(k)}(t)=a(\bra{c^{(k)}_\alpha}\ident_{[0,t]})$ and ${A^{(k)}}^\dagger_\alpha(t)=a^\dagger(\ket{c_\alpha^{(k)}}\ident_{[0,t]})$, 
with $\{\ket{c_{\alpha}^{(k)}}\}$ real and orthonormal for each $k$, such that
\begin{equation}
\sum_{\alpha}L_\alpha^{(k)}\sum_\beta L_\beta^{(l)}\expc{c_\alpha^{(k)}|c^{(l)}_\beta}\rightarrow I
\label{lcovar}
\end{equation}
as $k,l\rightarrow\infty$,
then the sequence of solutions to the HP equations $U_t^{(k)}$ corresponding to the Lindblad operators $\{L_\alpha^{(k)}\}$ (with no scattering terms) strongly converges.
\label{grwlimit}
\end{theorem}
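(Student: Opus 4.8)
The plan is to prove strong convergence by verifying a Cauchy criterion on a total subset of the state space and exploiting the unitarity of each $U_t^{(k)}$. Since $\norm{U_t^{(k)}}=1$ for every $k$, the sequence is uniformly bounded, so it suffices to show that $U_t^{(k)}\ket{\phi}$ is Cauchy for $\ket{\phi}$ ranging over a total set; the standard $\varepsilon/3$ argument then extends convergence to all of $\hilb_0\otimes\fock$. I would take as total set the family $\{\ket{\xi}\otimes\ket{e(f)}\}$ with $\ket{\xi}\in\hilb_0$ arbitrary and $f$ \emph{real-valued}. These are total because $f\mapsto\expc{\phi|e(f)}$ extends to an entire function on the complexification of any finite-dimensional real subspace of $\mathfrak{h}$, so vanishing on the real subspace forces it to vanish identically; hence orthogonality to all real exponential vectors implies orthogonality to all exponential vectors, which are total by construction. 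The restriction to \emph{real} $f$ (matched by the hypothesis that the $\ket{c_\alpha^{(k)}}$ are real) is what will make the differential equation below close.

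Fix such a vector $\ket{\psi}=\ket{\xi}\otimes\ket{e(f)}$ and set $h_{kl}(t)=\expc{U_t^{(k)}\psi|U_t^{(l)}\psi}$. Because each $U_t^{(k)}$ is unitary, $\norm{U_t^{(k)}\psi}^2=\norm{\psi}^2$ is constant, so
$$\norm{U_t^{(k)}\psi-U_t^{(l)}\psi}^2=2\norm{\psi}^2-2\,\mathrm{Re}\,h_{kl}(t),$$
and the whole problem reduces to showing $h_{kl}(t)\to\norm{\psi}^2$ as $k,l\to\infty$. I would compute $dh_{kl}$ from the two Hudson--Parthasarathy equations (\ref{hpeqn}) (with $S=I$, and each $L_\alpha^{(k)}$ self-adjoint) using the quantum It\^o product rule. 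Three contributions arise: the two drift terms, which by (\ref{fundmatelem}) reduce to $-\expc{U_t^{(k)}\psi|(G^{(k)\dagger}+G^{(l)})|U_t^{(l)}\psi}$ with $G^{(k)}=iH+\tfrac12\sum_\alpha {L_\alpha^{(k)}}^2$; and the It\^o correction $\expc{dU_t^{(k)}\psi|dU_t^{(l)}\psi}$, whose only surviving piece uses the mixed rule $dA_\alpha^{(k)}(t)\,dA_\beta^{(l)\dagger}(t)=\expc{c_\alpha^{(k)}|c_\beta^{(l)}}\,dt$ and equals $\expc{U_t^{(k)}\psi|K^{(k,l)}|U_t^{(l)}\psi}\,dt$ with $K^{(k,l)}=\sum_{\alpha\beta}L_\alpha^{(k)}L_\beta^{(l)}\expc{c_\alpha^{(k)}|c_\beta^{(l)}}$. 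The single-field terms generated by annihilation or creation increments hitting $\ket{e(f)}$ carry factors $f_\alpha^{(k)}-\overline{f_\alpha^{(k)}}$, which vanish precisely because both $f$ and $\ket{c_\alpha^{(k)}}$ are real. The result is the closed linear equation
$$\frac{d}{dt}h_{kl}(t)=\expc{U_t^{(k)}\psi\big|B^{(k,l)}\big|U_t^{(l)}\psi},\qquad B^{(k,l)}=-\tfrac12\sum_\alpha {L_\alpha^{(k)}}^2-\tfrac12\sum_\beta {L_\beta^{(l)}}^2+K^{(k,l)},$$
where the $iH$ contributions from $G^{(k)\dagger}$ and $G^{(l)}$ cancel.

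The structural heart of the argument is that the three hypotheses now enter exactly: $\sum_\alpha {L_\alpha^{(k)}}^2\to I$, $\sum_\beta {L_\beta^{(l)}}^2\to I$, and the covariance condition (\ref{lcovar}), $K^{(k,l)}\to I$, together give $B^{(k,l)}\to -\tfrac12 I-\tfrac12 I+I=0$. Integrating from $h_{kl}(0)=\norm{\psi}^2$ (as $U_0^{(k)}=I$) and using $\norm{U_s^{(k)}\psi}=\norm{\psi}$ yields
$$\big|h_{kl}(t)-\norm{\psi}^2\big|\le\int_0^t\big|\expc{U_s^{(k)}\psi|B^{(k,l)}|U_s^{(l)}\psi}\big|\,ds\le t\,\norm{\psi}^2\,\norm{B^{(k,l)}}\xrightarrow[k,l\to\infty]{}0,$$
which gives the Cauchy property, and hence strong convergence uniformly on compact time intervals.

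I expect the main obstacle to be controlling $B^{(k,l)}$ in a topology strong enough to make this last estimate uniform in the \emph{moving} vectors $U_s^{(l)}\psi$. The clean route assumes the three limits hold in operator norm---which is natural here, since $\sum_\alpha {L_\alpha^{(k)}}^2$ is a Riemann sum of the norm-continuous, rapidly decaying family $L_{\mathbf{x}}^2$ approximating $\int d^3\mathbf{x}\,L_{\mathbf{x}}^2=I$---so that $\norm{B^{(k,l)}}\to 0$ and the bound above closes immediately. If only strong convergence of the three limits is available, one must instead decompose $B^{(k,l)}$ into the differences $\sum_\beta {L_\beta^{(l)}}^2-I$, $\sum_\alpha {L_\alpha^{(k)}}^2-I$ and $K^{(k,l)}-I$ and supply additional uniform bounds on $U_s^{(l)}\psi$, which is considerably more delicate. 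A secondary technical point is the rigorous justification of the quantum It\^o computation with two distinct noise families $\{A_\alpha^{(k)}\}$ and $\{A_\beta^{(l)}\}$ on the common Fock space---in particular the mixed product rule and the adaptedness used to let the future increments act directly on $\ket{e(f)}$; these are standard but must be checked on the exponential domain $\mathrm{span}(E)$, which serves as a core for all operators involved.
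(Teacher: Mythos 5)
Your proposal is correct and follows essentially the same route as the paper: reduce strong convergence to showing $\expc{U_t^{(k)}\xi|U_t^{(l)}\xi}\to\expc{\xi|\xi}$ on (real) exponential vectors, differentiate this matrix element with the quantum It\^o rule so that the creation/annihilation terms drop out by reality of the test functions and of the $\ket{c_\alpha^{(k)}}$, and observe that the surviving drift $-\tfrac12\sum_\alpha {L_\alpha^{(k)}}^2-\tfrac12\sum_\beta {L_\beta^{(l)}}^2+\sum_{\alpha\beta}L_\alpha^{(k)}L_\beta^{(l)}\expc{c_\alpha^{(k)}|c_\beta^{(l)}}$ tends to zero by the three hypotheses. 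The only differences are cosmetic: you integrate the resulting bound directly (via $\norm{B^{(k,l)}}$), whereas the paper uses a relative $\epsilon'/3$ estimate and a logarithmic $\int dX_s/X_s$ argument yielding $e^{\pm\epsilon' t}$ bounds, and you explicitly flag the operator-norm versus strong-convergence issue in the hypotheses, which the paper passes over by asserting the relevant sums are positive contractions.
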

\begin{proof}
Let $\ket{\xi}=\ket{\phi\otimes e(u)}\in\hilb_0\otimes\Gamma(\mathfrak{h})$, where we will need $\mathfrak{h}$ to be $L^2(\rr_+)\otimes \hilb_1$ where $\hilb_1$ is a Hilbert space of 
countable dimension taken to be real for convenience, and $\phi\in\hilb_0$, $u\in\mathfrak{h}$. We've chosen $\ket{\xi}$ to be a member of a dense set in $\hilb_0\otimes\Gamma(\mathfrak{h})$. Then
the condition for strong convergence is that $\forall\epsilon>0$ $\exists N\in\mathbb{N}$ such that if $k,l>N$ we have
\begin{align*}
    \norm{(U^{(k)}_t-U^{(l)}_t)\ket{\xi}}^2=2\expc{\xi|\xi}-\expc{\xi\bigg|{U^{(k)}}^\dagger_t U^{(l)}_t\bigg|\xi}-\expc{\xi|{U_t^{(l)}}^\dagger U_t^{(k)}\bigg|\xi}<\epsilon
\end{align*}
where we have used that $U_t^{(k)}$ and $U^{(l)}_t$ are solutions to HP equations and therefore unitary.
For strong convergence then we require that $\expc{\xi\bigg|{U^{(k)}}^\dagger_t U^{(l)}_t\bigg|\xi}$ and its conjugate must converge to $\expc{\xi|\xi}$. If we examine the differential 
$d\expc{\xi\bigg|{U^{(k)}}^\dagger_t U^{(l)}_t\bigg|\xi}$, we can use the It\^o calculus to show that
\begin{align}
d\expc{\xi\bigg|{U^{(k)}}^\dagger_t U^{(l)}_t\bigg|\xi}&=\expc{\xi\bigg|\left[{U^{(k)}_t}^\dagger\sum_\alpha L_\alpha^{(k)}(d{A_\alpha^{(k)}}-d{A_\alpha^{(k)}}^\dagger)U^{(l)}_t\right]\bigg|\xi}
\nonumber\\
&+\expc{\xi\bigg|U^{(k)}\left[\sum_\beta L_\beta^{(l)}(d{A_\beta^{(l)}}^\dagger-dA_\beta^{(l)})\right]U^{(l)}_t\bigg|\xi}\nonumber\\
&+\expc{\xi\bigg|{U_t^{(k)}}^\dagger\left[\sum_\alpha L_\alpha^{(k)}\sum_\beta L_\beta^{(l)}\expc{c_\alpha^{(k)}|c_\beta^{(l)}}-\frac{1}{2}\sum_\alpha {L_\alpha^{(k)}}^2-\frac{1}{2}
\sum_\beta {L_\beta^{(l)}}^2\right]U^{(l)}_t\bigg|\xi}dt
\label{ukul}
\end{align}
where we have explicitly used the fact that the Lindblad operators are self-adjoint.
The first term in equation \ref{ukul} can be evaluated as
$$\sum_\alpha\expc{\xi\bigg|{U^{(k)}_t}^\dagger L_\alpha^{(k)}U_t^{(l)}\bigg|\xi}(\expc{u|c_\alpha}-\expc{c_\alpha|u})dt$$
and we see that as $\hilb_1$ was taken to be real, this term is zero, and similarly for the second term in equation \ref{ukul}. For the last term, we can choose an $N$ for each $\epsilon^\prime=
\dfrac{\epsilon}{T}$ 
such that for all $k,l>N$ we have, via the Cauchy-Shwarz inequality, that 
$$\bigg|\expc{\zeta\bigg|I-\sum_\alpha L_\alpha^{(k)}\sum_\beta L_\beta^{(l)}\expc{c_\alpha^{(k)}|c_\beta^{(l)}}\bigg|\eta}\bigg|\le\frac{\epsilon^\prime}{6}\expc{\zeta|\zeta}^{1/2}
\expc{\eta|\eta}^{1/2}$$
and 
$$\bigg|\expc{\zeta\bigg|I-\sum_\alpha {L_\alpha^{(k)}}^2\bigg|\eta}\bigg|\le\frac{\epsilon^\prime}{3}\expc{\zeta|\zeta}^{1/2}\expc{\eta|\eta}^{1/2},~~~\mbox{and}~~~~~\bigg|\expc{\zeta\bigg|I-\sum_\alpha 
{L_\alpha^{(l)}}^2\bigg|\eta}\bigg|\le\frac{\epsilon^\prime}{3}\expc{\zeta|\zeta}^{1/2}\expc{\eta|\eta}^{1/2}$$
for $\ket{\zeta},\ket{\eta}\in\hilb_0\otimes\Gamma(\mathfrak{h})$, because these sums are positive contractions. Taking $\bra{\zeta}=\bra{\xi U^{(k)}_t}$ and $\ket{\eta}=\ket{U^{(l)}_t\xi}$, 
and writing $X_t=\expc{\xi\bigg|{U^{(k)}_t}^\dagger U^{(l)}_t\bigg|\xi}$ we have upon integration that
$$|X_t-X_0|\le \dfrac{\epsilon^\prime}{2}t\expc{\xi|\xi}.$$
It follows that $X_t\ge-\dfrac{\epsilon^\prime}{2}t\expc{\xi|\xi}+\expc{\xi|\xi}$, using the fact that $U_0^{(k)}=U_0^{(l)}=I$. Doing the same for $X_t=\expc{\xi\bigg|{U_t^{(l)}}^\dagger U_t^{(k)}
\bigg|\xi}$,
we have
$$\norm{\left(U_t^{(k)}-U_t^{(l)}\right)\ket{\xi}}^2=2\expc{\xi|\xi}-2\expc{\xi|\xi}+t\epsilon^\prime\expc{\xi|\xi}<\epsilon\expc{\xi|\xi}.$$
\end{proof}
As we are free to choose whatever noises we want at each step $k$, so long as $\{\ket{c^{(k)}_\alpha}\}$ are orthonormal for each $k$, there is nothing in principle preventing us from 
finding a set of $\{\mathbf{x}_\alpha\}$, $\{L_\alpha^{(k)}\}$, $\{\ket{c_\alpha^{(k)}}\}$ and $\{\delta_k\}$ which satisfy the requirements of Theorem \ref{grwlimit}, however we leave the details for a future work. 

In the derivation of the stochastic Schr\"odinger equation which gives rise to the GRW unravelling, we made the substitution $L_{\alpha} + I = M_{\alpha}$.  This may seem problematic, since when the 
number of $\alpha$ is infinite, the sum of squares of the Lindblad operators no longer converges to a bounded operator.  It is however possible to make sense of this procedure in the following way.  
Let us denote the unitary process solving the HP equation with Lindblad operators  $L_{\alpha}$ by $U_t$.  If finitely many operators $L_{\alpha}$ are replaced by $L_{\alpha} + f_{\alpha}(t)$,
with $f_\alpha(t)$ a locally time square integrable function, the unitary process $V_t$  solving the corresponding modified HP equation is related to the previous one by $V_t = R_t U_t$, where 
$R_t=I_S\otimes W(\ident_{[0,t]}\sum_\alpha f_\alpha(t)\ket{\alpha})$ acts as the identity on the system space, and $W$ is the Weyl operator (see \cite{ParDevi17} for details).  
It follows that the Heisenberg evolution of a system observable $X$,
$$
j_t(X) = V^{\dagger}_tX \otimes I_EV_t = U^{\dagger}_tR^{\dagger}_tX \otimes I_ER_tU_t = U^{\dagger}_tX \otimes I_EU_t,
$$
 is unchanged under this transformation. Since this is true when any finite number of $L_{\alpha}$ is changed, it is natural to extend this by definition to infinitely many such changes. The GRW 
 unravelling is then selected from the equivalence class of unravellings which give rise to this Heisenberg evolution.
This procedure can also be done for non-Hermitian Lindblad operators and, as this has been proposed as a solution the unbounded energy increase in GRW \cite{Smir14}, this is certainly worth 
considering, although our proof does not directly apply to this case. 

\par
As a second application we recall that the unravellings of the Lindblad equation can be used as a computational method for integrating the Lindblad equation,
where, instead of having $n^2$ coupled ODEs for
the density matrix, we have $n$ coupled SDEs to solve. In the case of Poissonian noise, this algorithm dates back to the Monte Carlo wave function (MCWF) approach of Dalibard, Castin, and M\o lmer
\cite{DCM92}. Their algorithm may be described as follows \cite{MCD93}:
\begin{itemize}
\item[1.] Given the state at time $t$, $\ket{\psi_t}$, calculate the evolution of the state under the non-Hermitian Hamiltonian 
$$H=H_S-i\dfrac{\hbar}{2}\sum M_\alpha^\dagger M_\alpha,$$
which for a small enough time step $\delta t$ is approximately given by
$$\ket{\psi^0_{t+\delta t}}=\left(I-\dfrac{iH\delta t}{\hbar}\right)\ket{\psi_t}.$$
As this evolution is not Hermitian, norm is not preserved. The norm equals
$$\expc{\psi^0_{t+\delta t}|\psi^0_{t+\delta}}=\expc{\psi_t\bigg|\left(1+\dfrac{iH^\dagger\delta t}{\hbar}\right)\left(1-\dfrac{iH\delta t}{\hbar}\right)\bigg|\psi_t}=1-\delta p$$
with 
$$\delta p=\delta t\expc{\psi_t|H^\dagger-H|\psi_t}=\sum_\alpha \delta p_\alpha$$
and $\delta p_\alpha=\delta t\expc{\psi_t|M_\alpha^\dagger M_\alpha|\psi_t}\ge 0$. Adjust $\delta t$ so this equation is valid to first order in $\delta t$; we require $\delta p\ll 1$.
\item[2.] Decide whether a jump happens by choosing a random number $\epsilon$ from the uniform distribution on $[0,1]$. 
\begin{itemize}
    \item[i.] If $\delta p<\epsilon$, there is no jump. Renormalize the state vector found in step 1:
    $$\ket{\psi_{t+\delta t}}=\dfrac{\ket{\psi^0_{t+\delta t}}}{(1-\delta p)^{1/2}}$$
    \item[ii.] If $\delta p>\epsilon$ a jump occurs. Choose a jump according to the probability $\Pi_\alpha=\dfrac{\delta p_\alpha}{\delta p}$, so choose $\epsilon^\prime\in [0,1]$ and if
    $\sum_{\beta=0}^{\alpha-1}\Pi_{\beta}<\epsilon^\prime<\sum_{\beta=0}^{\alpha}\Pi_\beta$, with $\Pi_0=0$ and $\alpha=1,\ldots n$ if there are $n$ Lindblad operators, then jump corresponding 
    to the operator $M_{\alpha}$ occurs 
    and the new state is $$\ket{\psi_{t+\delta t}}=\dfrac{M_\alpha\ket{\psi_t}}{(\delta p_\alpha/\delta t)^{1/2}}$$
\end{itemize}
\end{itemize}
Observe that in the limit of small $\delta t$ (and so small $\delta p$) this algorithm reduces to the unravelling in Corollary 1. If we relate $d\ket{\psi_t}$ to $\ket{\psi_{t+\delta t}}-\ket{\psi_t}$
then if $\delta p$ is small we have $\dfrac{1}{(1-\delta p)^{1/2}}\approx 1-\dfrac{1}{2}\delta p$ and steps 1 and 2.i give the $dt$ term of equation \ref{canonpdpeqn}. The probability $\delta p_
\alpha$, that a jump corresponding to the operator $M_{\alpha}$ will occur in a time interval of length $\delta t$ becomes the jump rate $\norm{M_\alpha\psi_t}^2dt$ of the Poisson process of 
equation \ref{canonpdpeqn} for small $\delta t$.  The two terms in the coefficient of the Poisson process in equation \ref{canonpdpeqn} describe the state change of step 2.ii
Thus we can see that the unravelling which we have derived here from QSC considerations can be seen as the small $\delta t$ limit of the MCWF algorithm.  It is natural to conjecture that the 
(random) sequence of states generated by the algorithm converges to the solution of the stochastic Schr\"odinger equation.  We do not address this point which goes beyond the application of the 
algorithm to solving Lindblad equation.

\section{GRW from QSC and the measurement problem}

The spontaneous localization theories by Ghirardi, Rimini, Weber, Pearle \cite{GRW85,GPR90} and others rely on {\it postulating} nonlinearity and randomness in the evolution of a quantum 
system.  In the present work we use a linear unitary evolution defined by the HP equation to {\it derive} a stochastic nonlinear dynamics, which we then claim can be used to motivate
GRW, and related models.  In doing this, we follow Parthasarathy and Usha Devi \cite{ParDevi17}, who realized such a program in the form of a stochastic differential equation driven by Wiener 
processes.  In our case, they are replaced by  Poisson processes.
\par
In both cases, randomness appears thanks to an isomorphism $\Theta$ of the bosonic Fock space, representing the environment, with an appropriate space of random variables.  This maps the joint 
unitary evolution of the system and the environment to a stochastic evolution of the system's state, which does not preserve its norm.  It is thus the map $\Theta$ that generates the randomness 
which appears in the Born rule.  It is important to point out that the quantum noises used here can be obtained as a limit of real quantum fields, as detailed in the book by Accardi et al 
\cite{Accardi13}, and so this randomness need not be taken as fundamental, in GRW or otherwise. 
\par
In the case of GRW, nonlinearity is introduced by normalization, but in our work there is a Girsanov map $G$ which is an isomorphism between 
$L^2(\wt{N})$ and $L^2(\wt{N}')$ and so the normalization procedure is not nonlinear in the conventional sense. In fact, we may recover the norm given the normalized state (equation \ref{normeq}) 
so the normalization is entirely reversible. However, the Girsanov map depends on the initial state of the system and so the normalization procedure is conditioned on this initial state.
This in particular means that a superposition of two vector-valued processes in $L^2(\wt{N};\hilb_0)$ is not mapped to a superposition of normalized-vector-valued processes in $L^2(\wt{N}';\hilb_0)$ in 
accordance with Bassi and Ghirardi's argument against linearity \cite{BG00}.
\par 
We emphasize that a single vector in the Fock space is mapped by $\Theta$ onto a random variable---a functional of the Poisson process (or  Wiener process in \cite{ParDevi17}).  The evolution of the 
system is thus represented as a solution of the resulting SDE.  It is a random process, and for a fixed time---a random variable with values in the system's state space.  The procedure does 
not provide a correspondence between individual realizations of the solution of the SDE and vectors in the system's Hilbert space.  A parallel may be drawn with the Everettian interpretation here 
in that the quantum picture is that of a state which contains all information about possibilities, however we do not have a superposition of different possibilities, as Everett envisioned, since in 
this picture there is no sense in which different trajectories exist together; rather they are all contained in the single quantum state through the introduction of a probabilistic picture.

The two probabilistic representations of the environment---based on Wiener and Poisson processes---while mathematically equally valid (in fact, isomorphic), lead to very different physical 
interpretations of quantum evolution, in particular the measurement process.  While the Wiener picture describes continuous acquisition of information, in the Poisson case information is acquired 
in discrete portions, (see \cite{Bar06} for details on measurement schemes and the HP equation).  This mirrors the wave-particle complementarity and merits further exploration.  
Here we make no detailed statement as to the physical situations corresponding to the two 
probabilistic pictures.  We stress however, that either one of them can introduce more clarity into the description of the quantum evolution, as evidenced by the simplicity of the operator 
$G[\psi_t]$ versus the intractability of $\widehat{G}_t[\psi_t]$.  Mathematically, this is originating from additional structure of the spaces of random variables, in particular the presence of 
pointwise multiplication, absent in the Fock space.  

Lastly, there is another feature shared between this picture and Everett's interpretation---appearance of a part of the wave function which keeps track of the 
history of the system. Everett \cite{Ev57} introduced an observer part of the wave function which keeps track of the results of measurements along the branches, and which also has the effect of 
conditioning the branches on the trajectory of the system, as does $\widehat{G}[\psi_t]$. The norm process, $\phi_t$, has a similar property in that it encodes the history of the system into its state, 
and also has special mathematical structure---that of a (multiple of) an exponential martingale.  This structure is not accidental or artificial; rather, it appears
naturally as a result of the probabilistic interpretation of the quantum evolution.  The information contained in the norm process makes it a kind of universal observer which is entirely quantum,
appearing as part of the total quantum state, as Everett originally envisioned.

\section{Concluding remarks}

Poisson-driven unravellings of master equations were derived from HP evolution, and, with a particular choice of Lindblad operators,  a comparison was drawn between these unravellings and the
dynamics of the GRW model of quantum mechanics, a proposed solution to the measurement problem. The primary advantage of our approach is that the noise---considered classical in GRW---is derived 
here from the unitary quantum evolution, with no added dynamical features (but within the quantum noise model of the interaction with the environment).  Since the pioneering GRW work, 
stochastic collapse models have moved on to CSL and even relativistic equations \cite{BP99}. CSL uses a version of the Gisin-Percival equation which has already been explored from the perspective 
of HP evolution by Parthasarathy, Usha Devi, \cite{ParDevi17}, Barchielli and Belavkin \cite{Bel90,BarBel91}, the latter also deriving Poisson unravellings to slightly different ends with
Barchielli \cite{BarBel91} and Staszewski \cite{BelStas91}. It may also be possible to derive the relativistic models from a unitary evolution in a similar manner. In this case, rather than 
time dependent unitary groups, they would depend on proper time and the HP equation would not suffice since proper-time-dependent Lindblad operators are required.
This is a promising direction for future research, and a motivation to consider unitary quantum stochastic evolutions beyond the HP equation.
\begin{acknowledgements}
The authors were partially supported by the National Science Foundation grant DMS 1615045.  D.K. gratefully acknowledges Michael Tabor Fellowship from the Program of Applied Mathematics, 
University  of Arizona.  Both authors thank the Institute of Photonic Science, Castelldefels (Spain) and Maciej Lewenstein for their hospitality. 
\end{acknowledgements}
\bibliographystyle{abbrv}
\bibliography{article}
\end{document}